\newcommand{\VAR}[1]{{\it{#1\/}}}
\def\AlgoCaption#1#2{\caption{#1\label{#2-algo}}}
\def\refalgo#1{Algorithm~\protect\ref{#1-algo}}
\def\refpagealgo#1{\pagename~\protect\pageref{#1-algo}}
\title{Exploring Grid Polygons Online}
\author{Christian Icking\thanks{%
        FernUniversit\"at Hagen, 
        Praktische Informatik VI, 
        Feithstr.~142,
        D-58084~Hagen.
}
\and    Tom Kamphans\thanks{%
        Universit\"at Bonn,
        Institut f\"ur Informatik, Abt.~I,
        R\"omerstr.~164,
        D-53117~Bonn.
}
\andnl  Rolf Klein\footnotemark[2]
\and    Elmar Langetepe\footnotemark[2]
}
\date{December 2005}
\begin{document}
\maketitle

{\pagestyle{empty}
\mbox{}\newpage

\begin{abstract}
We investigate the exploration problem of a short-sighted mobile robot
moving in an unknown cellular room. To explore a
cell, the robot must enter it. Once inside, the robot knows which
of the~4 adjacent cells exist and which are boundary edges.
The robot starts from a specified cell adjacent to the room's
outer wall; it visits each cell, and returns to the start. Our interest
is in a short exploration tour; that is, in keeping the number
of multiple cell visits small.
For abitrary environments containing no obstacles
we provide a strategy producing tours of length
$$S \leq C + \frac{1}{2} E - 3,$$
and for environments containing obstacles we provide a strategy, that is 
bound by
$$S \leq C + \frac{1}{2} E + 3H + \WCW - 2,$$

where $C$ denotes the number of cells---the area---, $E$ denotes
the number of boundary edges---the perimeter---, and $H$ is
the number of obstacles, and $\WCW$ is a measure for the sinuosity 
of the given environment.

\smallskip
The strategies were implemented in a Java applet \cite{hiklm-gaesu-00}
that can be found in

\centerline{\tt http://www.geometrylab.de/Gridrobot/}

\medskip

\noindent
{\bf Key words:}
Robot navigation, Online algorithms, competitive analysis, unknown environment, obstacles.
\end{abstract}

\mbox{}\newpage\mbox{}
\clearpage
}
\pagenumbering{arabic}

\sect{Introduction}{intro}
Exploring an unknown environment and searching for a target
in unknown position are among the basic tasks of autonomous
mobile robots. Both problems have received a lot of attention
in computational geometry and in robotics; 
see, for example, \cite{ag-tsgr-03,bcr-sp-93,dkp-hlue1-98,fw-ola-98,hikk-pep-01,ik-skpcs-95,ikl-ocsws-99,iklss-ocsws-04,m-spn-97,m-gspno-00,rksi-rnut-93}.

In general, one assumes that the robot is equipped with an ideal vision
system that provides, in a continuous way, the full visibility polygon
of the robot's current position.
In practice, there are at least two reasons that limit the reach of this model.

\begin{itemize}
\item A realistic laser scanner can reliably detect objects only within a
distance of up to a few meters. Hence, in a large environment 
the robot has to move towards more distant areas to
explore them.
\item Service robots like lawn mowers or 
cleaning devices need to get close to the parts of the environment
they want to inspect and, possibly, to work on.
\end{itemize}

To accomodate these situations, we study in this paper the model of a 
rather short-sighted robot. We assume that the environment is given 
by a simple polygon, $P$. 
Its edges are either vertical or horizontal, and its vertices have integer
coordinates. Thus, the interior of $P$ consists of integer grid cells.
An interior cell is either free or blocked.%
\footnote{For convenience, we consider the exterior of the polygon to consist
of blocked cells only.}


The robot starts from a free cell, $s$, adjacent to the polygon's boundary.
From there it can enter one of the neighboring free cells, and so on.
Once inside a cell, the robot knows
which of its 4~neighbors are blocked and which are free.
The robot's task is to visit each free cell
inside~$P$ and to return to the start;
see~\reffig{figCells/RDFSExploreExample}\,(i) for an example.
This example shows a tour that visits each cell once and some cells even more
often. Our interest is in a short exploration tour, so we would like to keep
the number of excess cell visits small.

\pstexfig{(i) An example exploration tour.
          (ii) A shortest TSP tour for the same polygon.}
{figCells/RDFSExploreExample}

Even though our robot does not know its environment in advance, it is
interesting to ask how short a tour can be found in the offline situation
(i.e., when the environment is already known). This amounts to
constructing a shortest traveling salesperson tour on the free cells.

If the polygonal environment contains obstacles, the problem of finding
such a minimum length tour is known to be NP-hard, by work of Itai et
al.~\cite{ips-hpgg-82}. There are $1+\epsilon$ approximation schemes
by Grigni et al.~\cite{gkp-aspgt-95}, Arora~\cite{a-ptase-96},
and Mitchell~\cite{m-gsaps-96}, and a  
$\frac{53}{40}$ approximation by Arkin et al.~\cite{afm-aalmm-97}.

In a simple polygon without obstacles, the complexity of constructing
offline a minimum length tour seems to be open.
Ntafos~\cite{n-wrlv-92} and Arkin et al.~\cite{afm-aalmm-97}
have shown how to approximate the minimum length tour with
factors of $\frac{4}{3}$ and $\frac{6}{5}$, respectively.
Umans and Lenhart~\cite{ul-hcsgg-97} have provided an $O(C^4)$ algorithm for deciding 
if there exists a Hamiltonian cycle (i.e., a tour that visits each of the $C$ cells
of a polygon {\em exactly} once).
For the related problem of Hamiltonian paths (i.e., different
start and end positions), Everett~\cite{e-hpnrg-86} has given a polynomial algorithm
for certain grid graphs. Cho and Zelikovsky~\cite{hz-scthc-95}
studied {\em spanning closed trails} (a relaxation of Hamiltonian
cyles) in grid graphs.%
\footnote{The grid graph corresponding to a grid polygon, $P$, consists
of one node for every free cell in $P$.
Two nodes are connected by an edge, if their corresponding cells are adjacent.}


In this paper, our interest is in the online version of the cell exploration
problem. The task of exploring an grid
polygon with holes was independently considered by Gabriely and Rimon
\cite{gr-colcg-03}. They introduce a somehow artificial robot model by
distinguishing between the robot and its tool. The tool has the size of
one grid cell and moves from cell to cell. The robot moves between
the midpoints of $2\times 2$--blocks of cells.
The robot constructs a spanning tree on the $2\times 2$ cell-blocks
and the tools explores the polygon keeping the spanning tree edges on its
right side. Only if the tool's path is blocked by an obstacle, it switches
to the other side of the spanning tree and keeps the spanning tree edges on
it's left side. This model allows a smart analysis yielding an upper
bound of $C+B$, where $C$ denotes the number of cells and $B$ the number
of boundary cells.
However, this bound is larger than $C+\frac12E +3H+\WCW -2$ except for
corridors of width one, in which both bounds are the same, this may 
justify our more complicated analysis of the strategy.

Another online task is the {\em piecemeal exploration}, 
where the robot has to interrupt the exploration every 
now and then so as to return to the
start point, for example, to refuel.
Piecemeal exploration of grid graphs was studied by
Betke et~al.~\cite{brs-plue-95} and Albers et~al.~\cite{aks-eueo-02}.
Note that their objective is to visit every node {\em and} every edge,
whereas we require a complete coverage of only the cells.
Subdividing the robot's environment into grid cells is used also
in the robotics community, see, for example,
Moravec and Elfes~\cite{me-hrmwa-85}, and
Elfes~\cite{e-uogmr-89}.

Our paper is organized as follows:
In \refsect{defi} we give more detailled description of our robot and the
environment. We analyze the competitive complexity in \refsect{compcomp},
showing lower bounds for both simple polygons and polygons with holes.
In \refsect{smartdfs} we present an exploration strategy for simple 
polygons together with the analysis of the strategy and in
\refsect{cellexplore} we present and analyze a strategy for polygons
with holes.

\sect{Definitions}{defi}

\pstexfig{(i) Polygon with 23 cells, 38 edges and one(!) hole (black cells), 
(ii) the robot can determine which of the 4 adjacent cells are
free, and enter an adjacent free cell.}
{figCells/CellPoly}

\begin{defi}{gridpoly}
A \DEFI{cell}~ is a basic block in our environment, defined by a tuple
$(x,y)\in \N^2$. A cell is either \DEFI{free} and can be visited by the
robot, or \DEFI{blocked}  (\IE, unaccessible for the robot).%
\footnote{In the following, we sometimes use the terms 
{\em free cells} and {\em cells} synonymously.}
We call two cells $c_1=(x_1,y_1), c_2=(x_2,y_2)$
\DEFI{adjacent} or \DEFI{neighboring}, if they share a common edge
(\IE, if $|x_1-x_2|+|y_1-y_2|=1$ holds),
and \DEFI{touching}, if they share a common edge or corner.

A \DEFI{path}, $\pfad$, from a cell $s$ to a cell $t$ is 
a sequence of free cells $s=c_1, \ldots , c_n=t$ 
where $c_i$ and $c_{i+1}$
are adjacent for $i=1,\ldots,n-1$.
Let $\len{\pfad}$ denote the length of $\pfad$.
We assume that the cells have unit size, so the length of the path
is equal to the number of \DEFI{steps} from cell to cell that the robot walks.

A \DEFI{grid polygon}, $P$, is a connected set of free cells;
that is, for every $c_1, c_2\in P$ exists a path from $c_1$ to $c_2$
that lies completely in $P$.

We call a set of touching blocked cells that are
completely surrounded by free cells an \DEFI{obstacle}
or \DEFI{hole}, see \reffig{figCells/CellPoly}.
Polygons without holes are called \DEFI{simple polygon}\DEFI{s}.
\end{defi}

\pstexfig{The perimeter, $E$, is used to distinguish between {\em thin} and
{\em thick} environments.}{figCells/fleshyskinny}

We analyze the performance of an exploration strategy using
some parameters of the grid polygon.
In addition to the area, $C$, of a polygon we introduce 
the {\em perimeter}, $E$. $C$ is the number of free cells and
$E$ is the total number of edges that appear between a free cell and a blocked
cell, see, for example, \reffig{figCells/CellPoly} 
or \reffig{figCells/fleshyskinny}.
We use $E$ to distinguish between thin and thick environments,
see \refsect{compcomp}.
In \refsect{CellExplAnalyse} we introduce another parameter, the
sinuosity $\WCW$, to distinguish between straight and twisted polygons.

\sect{Competitive Complexity}{compcomp}
We are interested in an online exploration. In this setting, the environment is
not known to the robot in advance.
Thus, the first question is whether the robot is still able to approximate 
the optimum solution up to a constant factor in this setting.
There is a quick and rather simple answer to this question:

\begin{theo}{ExplCompl}
The competitive complexity of exploring an unknown cellular environment with
obstacles is equal to 2.
\end{theo}
\begin{proof}
Even if the environment is unknown we can apply a simple depth-first search 
algorithm (DFS) to the grid graph.
This results in a complete exploration in $2C-2$ steps.
The shortest tour needs at least $C$ steps to visit all cells
and to return to $s$, so
DFS is competitive with a factor of 2.

\smallskip
On the other hand, 2 is also a lower bound for the 
competitive factor of any strategy.
To prove this, we construct a special grid polygon depending on the 
behavior of the strategy. 
The start position, $s$, is located in a long corridor of width~1.
We fix a large number, $Q$, and observe how the strategy explores 
this corridor.
Two cases occur.

{\bf Case 1:}
The robot eventually returns to $s$ after walking at least $Q$ 
and at most
$2Q$ steps. At this time, we close the corridor with two unvisited cells,
one at each end, see \reffig{figCells/lowbound}(i).
Let $R$ be the number of cells visited so far.
The robot has already walked at least $2R-2$ steps and needs another $2R$
steps to visit the two remaining cells and to return to~$s$, whereas the shortest
tour needs only $2R$ steps to accomplish this task.

\pstexfig{A lower bound of 2 for the exploration of grid polygons.}
        {figCells/lowbound}

{\bf Case 2:}
In the remaining case the robot concentrates---more or less---on 
one end of the corridor.
Let $R$ be the number of cells visited after $2Q$ steps.
Now, we add a bifurcation at a cell $b$ immediately behind the
farthest visited cell in the corridor, see \reffig{figCells/lowbound}(ii).
Two paths arise, which turn back and run parallel to the long corridor.
If the robots returns to $s$ before exploring one of the two paths
an argument analogous to case 1 applies. Otherwise, one of the
two paths will eventually be explored up to the cell $e$ where
it turns out that this corridor is connected to the other end 
of the first corridor.
At this time, the other path is defined to be a dead end of length $R'$,
which closes just one cell behind the last visited cell $e'$.

From $e$ the robot still has to walk to the other end of the corridor, to
visit the dead end, and to return to~$s$.
Altogether, it will have walked at least four times the 
length of the corridor, $R$,
plus four times the length of the dead end, $R'$. The optimal path needs only
$2R+2R'$, apart from a constant number of steps for the vertical 
segments.
\smallskip

In any case, the lower bound for the number of steps 
tends to $2$ while $Q$ goes to infinity.
\end{proof}

\pstexfig{A lower bound for the exploration of simple polygons.
The dashed lines show the optimal solution.}
{figCells/sdfslowerbound}

\bigskip\noindent
We cannot apply \reftheo{ExplCompl} to simple polygons, because
we used a polygon with a hole to show the lower bound. 
The following lower bound holds for simple polygons.

\begin{theo}{ExplSimplCompl}
Every strategy for the exploration of a simple grid polygon with $C$ 
cells needs at least ${7\over 6}\, C$ steps.
\end{theo}
\begin{proof}
We assume that the robot starts in a corner of the polygon, 
see \reffig{figCells/sdfslowerbound}(i)
where $\triangle$ denotes the robot's position.
Let us assume, the strategy decides to walk one step to the east---if 
the strategy walks to the south we use a mirrored construction.
For the second step, the strategy has two possibilities: Either it
leaves the wall with a step to the south, 
see \reffig{figCells/sdfslowerbound}(ii),
or it continues to follow the wall with a further step to the east,
see \reffig{figCells/sdfslowerbound}(iii).
In the first case, we close the polygon as shown in 
\reffig{figCells/sdfslowerbound}(iv).
The robot needs at least 8 steps to explore this polygon, but the 
optimal strategy needs only 6 steps yielding a factor of
${8\over 6}\approx 1.3$. In the second case we proceed as follows.
If the robot leaves the boundary, we close the polygon as shown
in \reffig{figCells/sdfslowerbound}(v) and (vi). The robot needs 12 step,
but 10 steps are sufficient.
In the most interesting case, the robot still follows the wall,
see \reffig{figCells/sdfslowerbound}(vii). In this case, the robot will need
at least 28 steps to explore this polygon, whereas an optimal strategy 
needs only 24 steps. This leaves us with a factor of 
${28\over 24}={7\over 6}\approx 1.16$.

We can easily extend this pattern to build polygons of
arbitrary size by repeating the preceding construction several times
using the {\em entry} and {\em exit} cells denoted by the arrows
in \reffig{figCells/sdfslowerbound}(iv)--(vii).
As soon as the robot leaves one block, it enters the start cell of the next
block and the game starts again; that is,
we build the next block depending on the robot's behavior.
Note that this construction cannot lead to overlapping polygons or
polygons with holes, because the polygon always extends to the
same direction.
\end{proof}

\pstexfig{DFS is not the best possible strategy.}{figCells/dfsoptimal}

Even though we have seen in \reftheo{ExplCompl} that the simple DFS strategy
already achieves the optimal competitive factor in polygons with holes, 
DFS is not the best possible exploration strategy!
There is no reason to visit {\em each} cell twice just because 
this is required in some special situations like 
dead ends of width~1. Instead, a strategy should make use
of wider areas, see \reffig{figCells/dfsoptimal}.

We use the perimeter, $E$, to distinguish between thin environments
that have many corridors of width 1, and thick environments that
have wider areas, see \reffig{figCells/fleshyskinny} 
on \refpagefig{figCells/fleshyskinny}.
In the following sections we present strategies
that explore grid polygons using no more than
roughly $C+\frac12 E$ steps.
Since all cells in the environment have to be visited, $C$ is 
a lower bound on the
number of steps that are needed to explore the whole polygon and
to return to $s$.%
\footnote{More precisely, we need at least $C-1$ steps to visit every cell, 
and at least 1 step to return to $s$.}
Thus, $\approx \frac12 E$
is an upper bound for the number of additional cell visits.
For thick environments, the value of $E$ is in $O(\sqrt{C})$, so that 
the number of additional cell visits is
substantially smaller than the number of free cells. Only for polygons
that do not contain any $2\times2$ square of free cells,
$E$ achieves its maximum value of $2(C+1)$,
and our upper bound is equal to $2C-2$, which is the cost of applying DFS. 
But in this case one cannot do better, because even the optimal
offline strategy needs that number of steps.
In other cases, our strategies are more efficient than DFS.

\newpage
\sect{Exploring Simple Polygons}{smartdfs}
We have seen in the previous section that a simple DFS traversal 
achieves a competitive factor of 2. Because the lower bound for
simple grid polygons is substantially smaller, there may be a strategy
that yields a better factor. Indeed, we can improve the DFS strategy.
In this section, we give a precise description of DFS and present two
improvements that lead to a $\frac43$-competitive exploration strategy
for simple polygons.

\ssect{An Exploration Strategy}{SmartDFSstrat}

There are four possible directions---north, south, east and west---for 
the robot to move from one cell to an adjacent cell. 
We use the command \Befehl{move({\em dir})} to execute the actual motion of
the robot.
The function \Befehl{un\-ex\-plored({\em dir})} returns
true, if the cell in the given direction seen from the robot's 
current position is not yet visited, and false otherwise.
For a given direction {\em dir}, 
\Befehl{cw({\em dir})} denotes the direction turned $90^\circ$ clockwise, 
\Befehl{ccw({\em dir})} the direction turned $90^\circ$ counterclockwise, and
\Befehl{reverse({\em dir})} the direction turned by $180^\circ$. 

Using these basic commands, 
the simple DFS strategy can be implemented as shown in \refalgo{dfs}.
For every cell that is entered in direction {\em dir}, the robot
tries to visit the adjacent cells in clockwise order, see the procedure 
{\em ExploreCell}.
If the adjacent cell is still unexplored, the robot enters this cell,
recursively calls {\em ExploreCell}, and walks back, see the procedure
{\em ExploreStep}. 
Altogether, the polygon is explored following the {\em left-hand rule}:
The robot proceeds from one unexplored cell to the next while the 
polygon's boundary or the explored cells are always to its left hand side.

\begin{algorithm}
\AlgoCaption{DFS}{dfs}
\begin{description}
\item[DFS(\VAR{P}, \VAR{start}):]~
\begin{algorithmic}
\STATE Choose direction \VAR{dir}, so that reverse(\VAR{dir})
points to a blocked cell;
\STATE ExploreCell(\VAR{dir});
\end{algorithmic}

\item[ExploreCell(\VAR{dir}):]~
\begin{algorithmic}
  \STATE \COMMENT{Left-Hand Rule:}
  \STATE ExploreStep(ccw(\VAR{dir}));
  \STATE ExploreStep(\VAR{dir});
  \STATE ExploreStep(cw(\VAR{dir}));
\end{algorithmic}

\item[ExploreStep(\VAR{dir}):]~
\begin{algorithmic}
\IF{unexplored(\VAR{dir})}
  \STATE move(\VAR{dir});
  \STATE ExploreCell(\VAR{dir});
  \STATE move(reverse(\VAR{dir}));
\ENDIF
\end{algorithmic}
\end{description}
\end{algorithm}

Obviously, all cells are visited, because the graph is connected, and
the whole path consists of $2C-2$ steps, because each cell---except
for the start---is entered exactly once by the first \Befehl{move} statement,
and left exactly once by the second \Befehl{move} statement
in the procedure {\em ExploreStep}.

\pstexfig{First improvement to DFS: Return directly to those cells that
still have unexplored neighbors.}{figCells/dfsverbesserung1}

The first improvement to the simple DFS is to return directly to those
cells that have unexplored neighbors. 
See, for example, \reffig{figCells/dfsverbesserung1}:
After the robot has reached the cell $c_1$, DFS walks to $c_2$ through the
completely explored corridor of width 2. A more efficient return path
walks on a shortest path from $c_1$ to $c_2$. Note that the robot
can use for this shortest path only cells that are already known.
With this modification, the robot's position
might change between two calls of {\em ExploreStep}. Therefore, the procedure 
{\em ExploreCell} has to store the current position, and the robot has to
walk on the shortest path to this cell, see the procedure {\em ExploreStep}
in \refalgo{smartdfs}.
The function \Befehl{unexplored({\em cell}, {\em dir})} returns true, if the
cell in direction {\em dir} from {\em cell} is not yet visited.

\pstexfig{Second improvement to DFS: Detect polygon splits.}
{figCells/dfsverbesserung2}

Now, observe the polygon shown in \reffig{figCells/dfsverbesserung2}. 
DFS completely surrounds the polygon, returns to $c_2$ and explores
the left part of the polygon. After this, it walks to $c_1$ and
explores the right part. Altogether, the robot walks four times
through the narrow corridor. A more clever solution would
explore the right part immediately after the first visit of $c_1$, and
continue with the left part after this. This solution would walk
only two times through the corridor in the middle! 
The cell $c_1$ has the property that the graph of unvisited cells
splits into two components after $c_1$ is explored. We call cells
like this \DEFI{split cell}\DEFI{s}. 
The second improvement to DFS is to recognize split cells and 
diverge from the left-hand rule when a split cell is detected.
Essentially, we want to split the set of cells into several
components, which are finished in the reversed order
of their distances to the start cell.
The detection and handling of split cells is specified in 
\refsect{analysis}.
\refalgo{smartdfs} resumes both improvements to DFS.

\pstexfig{Straightforward strategies are not better than SmartDFS.}
{figCells/SimpleStrats}

Note that the straightforward strategy {\em Visit all boundary cells and
calculate the optimal offline path for the rest of the polygon}
does not achieve a competitive factor better than $2$. For example, in 
\reffig{figCells/SimpleStrats}(i) this strategy visits almost every 
boundary cell twice, whereas SmartDFS visits only one cell twice.
Even if we extend the simple strategy to detect split cells while visiting
the boundary cells, we can not
achieve a factor better than $\frac43$. A lower bound on the
performace of this strategy is a corridor of width 3, see
\reffig{figCells/SimpleStrats}(ii).
Moreover, it is not known whether the offline strategy is NP-hard for
simple polygons.

\begin{algorithm}
\AlgoCaption{SmartDFS}{smartdfs}
\begin{description}
\item[SmartDFS(\VAR{P}, \VAR{start}):]~
\begin{algorithmic}
\STATE Choose direction \VAR{dir} for the robot, 
so that reverse(\VAR{dir}) points to 
\STATE ~~~a blocked cell;
\STATE ExploreCell(\VAR{dir});
\STATE Walk on the shortest path to the start cell;
\end{algorithmic}

\item[ExploreCell(\VAR{dir}):]~
\begin{algorithmic}
\STATE Mark the current cell with the number of the current layer;
\STATE \VAR{base} $:=$ {\rm current position};
\IF{not isSplitCell(\VAR{base})}
  \STATE \COMMENT{Left-Hand Rule:}
  \STATE ExploreStep(\VAR{base}, ccw(\VAR{dir}));
  \STATE ExploreStep(\VAR{base}, \VAR{dir});
  \STATE ExploreStep(\VAR{base}, cw(\VAR{dir}));
\ELSE
  \STATE \COMMENT{choose different order, see \refpage{comporder}}\,ff
  \STATE Determine the types of the components using the layer numbers
  \STATE ~~~of the surrounding cells;
  \IF { No component of type III exists }
     \STATE Use the left-hand rule, but omit the first possible step.
  \ELSE
     \STATE Visit the component of type III at last.
  \ENDIF
\ENDIF
\end{algorithmic}

\item[ExploreStep(\VAR{base}, \VAR{dir}):]~
\begin{algorithmic}
\IF{unexplored(\VAR{base}, \VAR{dir})}
  \STATE Walk on shortest path using known cells to $base$;
  \STATE move(\VAR{dir});
  \STATE ExploreCell(\VAR{dir});
\ENDIF
\end{algorithmic}
\end{description}
\end{algorithm}


\ssect{The Analysis of SmartDFS}{analysis}
SmartDFS explores the polygon in layers: Beginning with the cells along the
boundary, SmartDFS proceeds towards the interior of $P$. 
Let us number the single layers:

\begin{defi}{layers} Let $P$ be a (simple) grid polygon. The 
boundary cells of $P$ uniquely define the {\em first layer} of $P$. 
The polygon 
$P$ without its first layer is called the {\em 1-offset} of $P$. 
The \ith{$\ell$} layer and the $\ell$-offset of $P$ are defined 
successively, see \reffig{figCells/doffset}.
\end{defi}

\pstexfig{The 2-offset (shaded) of a grid polygon $P$.}{figCells/doffset}

Note that the $\ell$-offset of a polygon $P$ is not necessarily
connected. Although the preceding definition is independent from any
strategy, SmartDFS can determine a cell's layer when the cell is visited
for the first time. We can define the $\ell$-offset in the same
way for a polygon with holes, but the layer of a given cell can no longer 
be determined on the first visit in this case. 
The $\ell$-offset has an important property:

\begin{lem}{offset}
The $\ell$-offset of a simple grid 
polygon, $P$, has at least $8\ell$ edges fewer than $P$.
\end{lem}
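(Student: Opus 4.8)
The plan is to induct on $\ell$, the case $\ell=1$ carrying essentially all of the geometric content. (The statement must be read as holding \emph{whenever} the $\ell$-offset is nonempty: a single cell already has empty $1$-offset and perimeter $4<8$. Throughout, the ``number of edges'' of a disconnected grid polygon means the sum of the perimeters of its components, and the induction is over all simple grid polygons.)

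For the reduction to $\ell=1$ I would use two facts. First, since a layer is detected cell-locally, the $\ell$-offset of $P$ is the disjoint union, over the connected components $Q_1,\dots,Q_m$ of the $1$-offset $P'$ of $P$, of the $(\ell-1)$-offsets of the $Q_i$. Second, each $Q_i$ is again a \emph{simple} grid polygon: the complement of $P'$ is the complement of $P$ together with the first-layer cells, every first-layer cell touches the complement of $P$, and that complement is connected because $P$ has no holes; hence the complement of $P'$ --- and of each $Q_i$ --- is connected. Granting the base case $E(P')\le E(P)-8$, the inductive step is arithmetic: if the $\ell$-offset of $P$ is nonempty then so is $P'$ (the offsets are nested), and at least one $Q_i$ has a nonempty $(\ell-1)$-offset; for every such $Q_i$ the hypothesis bounds the edges of its $(\ell-1)$-offset by $E(Q_i)-8(\ell-1)$, the remaining $Q_i$ contribute $0$, and since at least one term is present, summation gives at most $\sum_i E(Q_i)-8(\ell-1)=E(P')-8(\ell-1)\le E(P)-8\ell$.

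For the base case, assume $P'$ nonempty and argue along the boundary. Because $P$ has no holes, $\partial P$ is a single closed rectilinear Jordan curve; traversed with the free cells on the left it turns through a total of $2\pi$, so the number $k_+$ of convex corners exceeds the number $k_-$ of reflex corners by exactly $4$. Every edge of $\partial P'$ separates a cell $c\in P'$ from a cell that must be a first-layer cell of $P$ (being a neighbour of $c$ it is free, but it is not in $P'$), so $\partial P'$ runs one unit inside $\partial P$. In the generic situation --- no unit boundary edge between two convex corners, and no bottleneck of width at most two --- $\partial P'$ is $\partial P$ pushed one unit inward: a maximal straight boundary segment of length $m$ becomes one of length $m-a$, where $a\in\{0,1,2\}$ is the number of its convex-corner endpoints, and at each reflex corner two new unit edges are inserted. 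As every convex corner is an endpoint of two segments, the total shortening is $2k_+$ and the total insertion is $2k_-$, so $E(P')=E(P)-2(k_+-k_-)=E(P)-8$.

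The step requiring real care --- and the main obstacle --- is the degenerate situation: fingers of width one, corridors of width two, or unit edges joining two convex corners make the pushed-in boundary cross itself or give some segment negative length, which signals that the offending part of $P$ disappears in $P'$. There $\partial P'$ is \emph{strictly} shorter than the bookkeeping above predicts, so $E(P')\le E(P)-8$ still holds; this is exactly the point at which nonemptiness of $P'$ matters, since a fully collapsing polygon has $E(P')=0$ and would violate the bound. To make this rigorous one can cut $P$ into ``fat'' pieces joined by thin bridges and appendages, apply the generic count to the fat pieces, and verify that each vanishing thin piece had contributed to $E(P)$ at least what the generic count credited it; alternatively, from the exact identity $E(P')=4|L|-E(P)-2E_{LL}$, where $L$ is the set of first-layer cells and $E_{LL}$ the number of adjacent pairs inside $L$, the lemma becomes $E(P)+E_{LL}\ge 2|L|+4$, whose additive constant $4$ again traces back to the four ``extreme'' convex corners of $P$.
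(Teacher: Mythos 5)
Your overall architecture is close in spirit to the paper's proof: both arguments hinge on the fact that a rectilinear boundary has exactly four more convex corners than reflex corners (the paper quotes O'Rourke's identity that the number of vertices equals twice the number of reflex vertices plus four), so that pushing the boundary inward loses a net of $8$ (respectively $8\ell$) edges, with thin parts of the polygon disposed of separately. Your organizational differences are legitimate and even clarifying: the reduction to $\ell=1$ by induction over the components of the $1$-offset is valid, since the $\ell$-offset is by definition the $(\ell-1)$-offset of the $1$-offset, each component is again simple because every first-layer cell touches the connected complement of $P$, and the summation with at least one nonempty component goes through; your remark that the statement should be read only for nonempty offsets is also a correct observation about the lemma as stated. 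The paper instead works with general $\ell$ directly: it first cuts off blind alleys narrower than $2\ell$ (these provably do not affect the $\ell$-offset), then walks clockwise around the boundary cells and charges every left turn with a gain of \emph{at most} $2\ell$ offset edges and every right turn with a loss of \emph{at least} $2\ell$, so the one-sided estimates absorb the residual degeneracies without any induction.

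The genuine gap in your write-up is the degenerate situation in the base case, which you yourself flag as the main obstacle but do not close. The assertion that a collapsing thin feature makes $E(P')$ strictly smaller than the generic corner bookkeeping predicts is precisely the content of the lemma in those configurations, and neither of your two proposed routes discharges it: the decomposition into ``fat pieces joined by thin bridges and appendages'' is left entirely unspecified (what the pieces are, what each thin piece ``was credited,'' and why the accounting is additive), and the identity $E(P') = 4|L| - E(P) - 2E_{LL}$ — which is correct — merely restates the lemma as $E(P) + E_{LL} \ge 2|L| + 4$ for polygons with nonempty $1$-offset, again without proof; the appeal to ``the four extreme convex corners'' is the same unfinished corner count in disguise. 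So, as written, the inequality is established only in the generic case. To complete the argument you need either a trimming step in the style of the paper (remove width-$\le 2$ blind alleys and, more generally, account for width-$\le 2$ corridors and unit notches that are not blind alleys) followed by one-sided per-corner estimates, or a self-contained proof of $E(P) + E_{LL} \ge 2|L| + 4$.
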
 

\begin{proof}
First, we can cut off blind alleys that are narrower than $2\ell$, because
those parts of $P$ do not affect the $\ell$-offset.
We walk clockwise around the boundary cells of the remaining polygon, see
\reffig{figCells/doffset}. For every left turn the offset gains at most $2\ell$ 
edges and for every right turn the offset looses at least $2\ell$ edges. 
O'Rourke \cite{o-agta-87} showed that 
$\mbox{\#vertices} = 2\cdot \mbox{\#reflex vertices} +4$ holds for
orthogonal polygons, so
there are four more right turns than left turns. 
\end{proof}

\pstexfig{A decomposition of $P$ at the split cell $c$ and 
its handling in {\rm SmartDFS}.}{figCells/Decomposition}

\refdefi{layers} allows us to specify the detection and handling of
a split cell in \smartDFS. We start with the handling of a
split cell and defer split cell detection.

%
%
Let us consider the situation shown in \reffig{figCells/Decomposition}(i)
to explain the handling of a split cell.
\smartDFS\ has just met the
first split cell, $c$, in the fourth layer of $P$. $P$ divides into
three parts:
$$P = K_1 \disjoint K_2 \disjoint\ \{\,\mbox{visited cells of } P\,\},$$
where $K_1$ and $K_2$ denote the connected components of the set of unvisited
cells.
In this case it is reasonable to explore the component $K_2$ first, because
the start cell $s$ is closer to $K_1$; that is, we can extend
$K_1$ with $\ell$ layers, such that the resulting polygon contains 
the start cell $s$.

%
%
More generally, we want to divide our polygon $P$ into two parts, $P_1$
and $P_2$, so that each of them is an extension of the two
components. Both polygons overlap in the area around the split cell $c$.
At least one of these polygons contains the start cell. If only one
of the polygons contains $s$, we want our strategy to 
explore this part at last, expecting that in this part the path from the 
last visited cell
back to $s$ is the shorter than in the other part.
Vice versa, if there is a polygon that does {\em not} contain $s$, we explore 
the corresponding component first. In \reffig{figCells/Decomposition},
\smartDFS\ recursively enters $K_2$, returns to 
the split cell $c$, and explores the component $K_1$ next.

%
%
In the preceding example, there is only one split cell in $P$, but
in general there will be a sequence of split cells, $c_1,\ldots,c_k$.
In this case, we apply the handling of split cells in a recursive way;
that is, if a split cell $c_{i+1}, 1\leq i< k$, is detected in one of the two 
components occurring at $c_i$ we proceed the same way as described earlier. 
Only the role of the start cell is now played by the preceding 
split cell $c_i$. In the following, the term {\em start cell}\/ always
refers to the start cell of the current component; that is, either
to $s$ or to the previously detected split cell. 
Further, it may occur that three components arise at a split cell, see
\reffig{figCells/splitcell2}(i) on \refpagefig{figCells/splitcell2}.
We handle this case as two successive polygon splits
occurring at the same split cell.

%
%

\pstexfig{Several types of components.}{figCells/splitcell}

\newpage

\paragraph*{Visiting Order}~\\
We use the layer numbers to decide which component we have 
to visit at last. Whenever a split cell occurs in layer $\ell$,
every component is one of the following\labelpage{comporder}
types, see \reffig{figCells/splitcell}:

\begin{romanlist}
\item[I.] $K_i$ is {\em completely} surrounded by layer $\ell$%
\footnote{More precisely, the part of layer $\ell$ that surrounds $K_i$ is
completely visited. For convenience, we use the slightly sloppy, but shorter 
form.}
\labelpage{comptypes}
\item[II.] $K_i$ is {\em not} surrounded by layer $\ell$
\item[III.] $K_i$ is {\em partially} surrounded by layer $\ell$
\end{romanlist}

\pstexfig{Switching the current layer.}{figCells/layerswitch}

\smallskip
There are two cases, in which \smartDFS\ switches from a layer $\ell-1$ 
to layer $\ell$. Either it reaches the first cell of
layer $\ell-1$ in the current component and thus passes the start 
cell---see, for example, the switch from layer 1 to layer 2 in 
\reffig{figCells/layerswitch}---, or it hits another cell of 
layer $\ell-1$ but no polygon split occurs, such as the 
switch from layer 2 to layer 3 in in \reffig{figCells/layerswitch}.
In the second case, the considered start cell must be located in a narrow 
passage that is completely explored; otherwise, the strategy would
be able to reach the first cell of layer $\ell-1$ as in the first case.
In both cases the part of $P$ surrounding a component of type III 
contains the first cell of the current layer $\ell$
as well as the start cell. 
Therefore, it is reasonable to explore the component of type III at last.

There are two cases, in which no component of type III exists 
when a split cell is detected:
\begin{enumerate}
\item The part of the polygon that contains the preceding start cell is
  explored completely, see for example \reffig{figCells/splitcell2}(i).
  In this case the order of the components makes no difference.%
  \footnote{In \reffig{figCells/splitcell2}(i) we gain two steps, if we
    explore the part left to the splitcell at last and 
    do not return to the split cell after this part is completely
    explored, but return immediately to the start cell. But
    decisions like this require facts of much more global type 
    than we consider up to now.
    However, for the analysis of our strategy and the upper bound
    shortcuts like this do not matter.}

\item Both components are completely surrounded by a layer,
  because the polygon split and the switch from one layer to the
  next occurs within the same cell, see \reffig{figCells/splitcell2}(ii).
  A step that follows the left-hand rule will move towards the
  start cell, so we just omit this step. More precisely, if the
  the robot can walk to the left, we prefer a step forward
  to a step to the right. If the robot cannot walk to the left
  but straight forward, we proceed with a step to the right.
\end{enumerate}

We proceed
with the rule in case 2 whenever there is no component of type III,
because the order in case 1 does not make a difference.

\pstexfig{No component of type III exists.}{figCells/splitcell2}

%
%
\newpage 
\paragraph*{An Upper Bound on the Number of Steps}~\\
For the analysis of our strategy we consider two 
polygons, $P_1$ and $P_2$, as follows.
Let $Q$ be the square of width $2q+1$ around $c$ with
$$q:=\cases{\ell, & if $K_2$ is of type I\cr \ell-1, 
                  & if $K_2$ is of type II},$$
where $K_2$ denotes the component that is explored first, and
$\ell$ denotes the layer in which the split cell was found.
We choose $P_2 \subset P\cup Q$ such that $K_2 \cup \{c\}$ 
is the $q$-offset of $P_2$, 
and $P_1 := ((P\backslash P_2) \cup Q) \cap P$,
see \reffig{figCells/Decomposition}. 
The intersection with $P$ is necessary, because $Q$ may exceed the boundary
of $P$. Note that at least $P_1$ contains the preceding start cell.
There is an arbitrary number of polygons $P_2$, such that
$K_2 \cup \{c\}$ is the $q$-offset of $P_2$, because blind alleys of $P_2$ 
that are not wider than $2q$ do not affect the $q$-offset.
To ensure a unique choice of $P_1$ and $P_2$, we require
that both $P_1$ and $P_2$ are connected, and both $P \cup Q=P_1 \cup P_2$ 
and $P_1 \cap P_2 \subseteq Q$ are satisfied.

The choice of $P_1, P_2$ and $Q$ ensures that the robot's path
in $P_1\backslash Q$ and in $P_2\backslash Q$ do not change 
compared to the path in $P$. The parts of the robot's path
that lead from $P_1$ to $P_2$ and from $P_2$ to $P_1$
are fully contained in the square $Q$.
Just the parts inside $Q$ are bended to connect the appropriate
paths inside $P_1$ and $P_2$,
see \reffig{figCells/Decomposition} and \reffig{figCells/Decomposition2}.

\pstexfig{The component $K_2$ is of type I. The square $Q$ may exceed
$P$.}{figCells/Decomposition2}

In \reffig{figCells/Decomposition}, $K_1$ is of type III and $K_2$ is 
of type II. A component of type I occurs, if we detect a split cell
as shown in \reffig{figCells/Decomposition2}.
Note that $Q$ may exceed $P$, but $P_1$ and $P_2$
are still well-defined.

%
%

\pstexfig{The order of components is not necessarily optimal.}
{figCells/SplitCellNotOptimal}

Remark that we do not guarantee that the path from the last 
visited cell back to the corresponding start cell is the
shortest possible path. See, for example,
\reffig{figCells/SplitCellNotOptimal}:
A split cell is met in layer 2. Following the preceding rule, 
\smartDFS\ enters $K_2$ first, returns to $c$, explores $K_1$,
and returns to $s$. A path that visits $K_1$ first and moves
from the upper cell in $K_2$ to $s$ is
slightly shorter. A case like this may occur if the first cell of the current 
layer lies in $Q$.
However, we guarantee that there is only one return path 
in $P_1\backslash Q$ and in $P_2\backslash Q$; that is, only
one path leads from the last visited cell back to the preceding
start cell causing double visits of cells.

\bigskip

%
%
We want to visit every cell in the polygon and to return to
$s$. Every  strategy needs at least $C(P)$ steps to fulfill this task,
where $C(P)$ denotes the number of cells in $P$.
Thus, we can split the overall length of the exploration path, $\pfad$,
into two parts, $C(P)$ and $\excess(P)$, with $\len{\pfad}=C(P)+\excess(P)$.
$C(P)$ is a lower bound on the number of steps that are needed for
the exploration task, whereas
$\excess(P)$ is the number of additional cell visits.

Because \smartDFS\ recursively explores $K_2\cup\{c\}$, we want to
apply the upper bound inductively to the component $K_2\cup \{c\}$. 
If we explore $P_1$ with \smartDFS\ until $c$ is met, 
the set of unvisited cells of $P_1$ is equal to $K_1$, because the 
path outside $Q$ do not change. Thus, we can apply
our bound inductively to $P_1$, too. 
The following lemma gives us the relation between
the path lengths in $P$ and the path lengths in the two components.

\begin{lem}{component}
Let $P$ be a simple grid polygon. 
Let the robot visit the first 
split cell, $c$, which splits the unvisited cells of $P$ into two components 
$K_1$ and $K_2$, where $K_2$ is of type I or II. 
With the preceding notations we have
$$\excess(P)\leq \excess(P_1)+ \excess(K_2\cup\{c\})+1\; .$$
\end{lem}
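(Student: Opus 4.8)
The plan is to track the exploration path of \smartDFS\ on $P$ and compare it, segment by segment, with the paths that \smartDFS\ would produce on $P_1$ and on $K_2\cup\{c\}$. By the construction of $P_1$, $P_2$ and the square $Q$, the robot's trajectory in $P_1\backslash Q$ and in $P_2\backslash Q$ is identical to the corresponding portions of its trajectory in $P$; only inside $Q$ are the pieces re-routed so as to connect the $P_1$-part to the $P_2$-part. Since $K_2\cup\{c\}$ is the $q$-offset of $P_2$ and the split cell $c$ lies in layer $\ell$, the path of \smartDFS\ restricted to $K_2\cup\{c\}$ is exactly the recursive exploration of that component with $c$ playing the role of the start cell; likewise, running \smartDFS\ on $P_1$ until $c$ is first met leaves precisely $K_1$ unvisited, so the $P_1$-path is the honest \smartDFS\ exploration of $P_1$.

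First I would decompose $\len{\pfad}$, the length of the \smartDFS\ tour on $P$, into three contributions: the steps taken while exploring the cells of $K_2\cup\{c\}$, the steps taken while exploring the rest (which, outside $Q$, coincide with the $P_1$-exploration up to and after the visit of $c$), and the connecting steps inside $Q$ that splice the two sub-tours together. For the first two contributions I use $\len{\pfad(P_2')}=C(K_2\cup\{c\})+\excess(K_2\cup\{c\})$ and $\len{\pfad(P_1)}=C(P_1)+\excess(P_1)$ respectively, where $\pfad(P_2')$ is the sub-tour on $K_2\cup\{c\}$. The combinatorial identity to nail down is that $C(P)=C(P_1)+C(K_2\cup\{c\})-1$, since $P_1$ and $K_2\cup\{c\}$ partition the free cells of $P$ except that the cell $c$ is counted in both; the subtracted $1$ is exactly the shared split cell. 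Subtracting $C(P)$ from $\len{\pfad}$ then yields $\excess(P)\le \excess(P_1)+\excess(K_2\cup\{c\})+(\text{number of surplus connecting steps in }Q)$, and the content of the lemma is that this last surplus is at most $1$.

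The main obstacle, therefore, is the careful accounting of the connecting moves inside $Q$: I must show that bending the \smartDFS\ path at $c$ so that the robot first descends into $K_2$, returns to $c$, then continues into $K_1$, costs at most one extra step beyond what the two sub-tours already pay. The key observation is that in the sub-tour on $K_2\cup\{c\}$ the robot both enters and leaves through $c$ (it is the designated start cell of that component, so its sub-tour begins and ends at $c$), while in the $P_1$-tour the cell $c$ is visited once on the way in and the robot would otherwise return to $c$ at most once more; matching these endpoints up, all but possibly one of the traversals of $c$ and its immediate neighbours in $Q$ is already accounted for in $\excess(P_1)$ or $\excess(K_2\cup\{c\})$, and the re-routing introduces a single additional visit of $c$. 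Here I would lean on the remark made just before the lemma — that there is only one return path in $P_1\backslash Q$ and one in $P_2\backslash Q$ — so that no double counting is hidden outside $Q$, and the $+1$ genuinely bounds the overhead of the splice. A short case distinction on whether $K_2$ is of type I ($q=\ell$) or type II ($q=\ell-1$), using that $Q$ is the $(2q+1)$-square around $c$ and hence large enough to contain the full re-routing, closes the argument.
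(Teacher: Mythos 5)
Your overall plan---compare the tour in $P$ piecewise with the tours in $P_1$ and in $K_2\cup\{c\}$, and charge the splice at $c$ with $+1$---is in spirit the paper's argument, but the concrete bookkeeping you build it on has a genuine gap. The identity $C(P)=C(P_1)+C(K_2\cup\{c\})-1$ is false in general: $P_1$ and $K_2\cup\{c\}$ do \emph{not} partition the cells of $P$ with only $c$ in common. By construction $P_2$ consists of $K_2\cup\{c\}$ together with $q$ full layers of already-visited cells of $P$ around it, and every such visited cell lying outside $Q$ belongs neither to $P_1=((P\setminus P_2)\cup Q)\cap P$ nor to $K_2\cup\{c\}$; conversely, the cells of $K_2$ inside $Q$ (for instance the neighbor of $c$ in $K_2$, as soon as $q\geq 1$) lie in $P_1$ as well, so the overlap is strictly larger than $\{c\}$. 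For the same reason your decomposition of $\len{\pfad}$ is incomplete: the portion of the tour in $P$ that runs through the visited layers surrounding $K_2$, i.e.\ through $P_2\setminus(K_2\cup\{c\})$ outside $Q$, coincides with no part of the exploration of $P_1$, so it is covered neither by $\len{\pfad(P_1)}$ nor by the sub-tour on $K_2\cup\{c\}$ nor by the ``connecting steps in $Q$''.

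The missing idea---and it is exactly the one the paper uses---is that because $c$ is the \emph{first} split cell, every cell of $P_2\setminus(K_2\cup\{c\})$ is visited exactly once, so this region contributes nothing to $\excess(P)$ and can be dropped from the accounting altogether. One then argues directly in terms of excess visits rather than tour lengths and cell counts: the recursive exploration of $K_2\cup\{c\}$ with start cell $c$ contributes $\excess(K_2\cup\{c\})$; the remaining part of the tour, which outside $Q$ is unchanged, is exactly the \smartDFS\ exploration of $P_1$ and contributes $\excess(P_1)$; and the only additional multiple visit is the second visit of $c$ after $K_2\cup\{c\}$ is finished, which gives the $+1$. Your length-based version could perhaps be repaired by adding correction terms for the uncovered and doubly-counted cells on both sides and observing that they cancel precisely because those cells carry no excess, but that observation is never made, and as written both the partition claim and the identity it yields are wrong.
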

\begin{proof}
The strategy \smartDFS\ has reached the split cell $c$ and 
explores $K_2\cup\{c\}$ with start cell $c$ first. Because $c$ is the
first split cell, there is no excess in $P_2\backslash (K_2\cup\{c\})$
and it suffices to consider $\excess(K_2\cup\{c\})$ for this
part of the polygon. 
After $K_2\cup\{c\}$ is finished, the robot returns to $c$ 
and explores $K_1$. For this part we take 
$\excess(P_1)$ into account. 
Finally, we add one single step, because the split cell $c$
is visited twice: once, when \smartDFS\ detects the split and
once more after the exploration of $K_2\cup\{c\}$ is
finished. Altogether, the given bound is achieved. 
\end{proof}

$c$ is the first split cell in $P$, so $K_2\cup\{c\}$ is the $q$-offset of
$P_2$ and we can apply \reflem{offset} to bound
the number of boundary edges of $K_2\cup\{c\}$ by the number
of boundary edges of $P_2$. The following lemma
allows us to charge the number of edges in $P_1$ and $P_2$
against the number of edges in $P$ and $Q$.

\begin{lem}{edgecount}
Let $P$ be a simple grid polygon, and let $P_1, P_2$ and $Q$ be defined as 
earlier. The number of edges satisfy the equation
$$E(P_1) + E(P_2) = E(P) + E(Q)\;.$$
\end{lem}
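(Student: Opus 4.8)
The plan is to argue by a clean inclusion–exclusion / double-counting on edges. Recall the boundary edges counted by $E(\cdot)$ are exactly the edges separating a free cell from a blocked cell (where "blocked" includes the exterior). For a region $R$ (a set of free cells) sitting inside a fixed grid, I would define, for each unit grid edge $e$, an indicator $\chi_R(e)$ that is $1$ precisely when $e$ has a free cell of $R$ on one side and a non-free cell (relative to $R$) on the other; then $E(R) = \sum_e \chi_R(e)$. The key structural facts I would invoke are the ones already established in the surrounding text: $P \cup Q = P_1 \cup P_2$ and $P_1 \cap P_2 \subseteq Q$, together with $P_1, P_2, Q$ all being connected. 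From these I would first note the sharper identity actually needed, namely $P_1 \cap P_2 = P \cap Q$ (this follows because $P_1 = ((P\setminus P_2)\cup Q)\cap P$ forces everything outside $Q$ to be split cleanly between $P_1$ and $P_2$, so the overlap is exactly the part of $Q$ that lies in $P$), and that $Q$ is a full solid square, hence has no "holes" to worry about.

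The main step is then a local, edge-by-edge case check. Fix a grid edge $e$ with the two incident cells $a$ and $b$. I would partition into cases according to how $a$ and $b$ meet the four relevant sets, using that every cell of $P\cup Q$ lies in $P_1$ or $P_2$ (or both, inside $Q$). Write $\alpha_i, \beta_i$ for membership of $a,b$ in $P_i$, and similarly for $P$ and $Q$. The claim to verify cell-wise is $\chi_{P_1}(e) + \chi_{P_2}(e) = \chi_P(e) + \chi_Q(e)$ for every $e$. The cases are: (a) both $a,b$ outside $P\cup Q$ — all four indicators vanish; (b) exactly one of $a,b$ in $P\cup Q$ — then that cell lies in exactly one of $P_1,P_2$ unless it lies in $P\cap Q$, and one checks the contributions match ($1=1$ on the $P$ side when the cell is in $P$, $1$ on the $Q$ side when in $Q$, and the left side picks up exactly the same total because the cell lies in $P_1$ iff in $P$-outside-$Q$, in $P_2$ similarly, and in both iff in $P\cap Q$); (c) both $a,b$ in $P\cup Q$ — here no edge is a boundary edge on either side unless one of the two cells is "blocked relative to" one of the subregions but free in it relative to another, which cannot happen since a cell free in $P$ or in $Q$ is free in the corresponding $P_i$. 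Collecting the cases, every unit edge contributes the same amount to both sides, and summing over all $e$ gives the lemma.

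The part I expect to be the actual obstacle is case (b): carefully handling an edge straddling the "seam," i.e. one cell inside $Q$ (hence possibly in both $P_1$ and $P_2$) and the neighboring cell outside $Q$ but inside $P$. There one must use that $P_1$ and $P_2$ are each connected and that $Q$ is a solid square whose boundary is crossed "cleanly" — precisely the content of the sentence in the text that the robot's path only bends inside $Q$ — to be sure that a cell just outside $Q$ is assigned to exactly one $P_i$, and that this is the same $P_i$ whose copy of $Q$ the seam-edge's other endpoint belongs to, so the two contributions on the left do not spuriously double- or under-count the boundary edge. Once the seam bookkeeping is pinned down, the remaining cases are immediate. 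I would also remark that the identity is purely combinatorial and does not use Lemma \ref{offset}; it is the companion counting fact that, combined with \ref{offset}, will let the induction on \ref{component} close.
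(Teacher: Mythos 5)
Your overall route---counting boundary edges by inclusion--exclusion, using $P_1\cup P_2=P\cup Q$ and $P_1\cap P_2=P\cap Q$---is in spirit the same as the paper's proof, which applies the identity $E(A)+E(B)=E(A\cup B)+E(A\cap B)$ to the pairs $(P_1,P_2)$ and $(P,Q)$; your edge-by-edge check is essentially an attempt to establish that identity from scratch. The execution, however, has a genuine gap, located exactly where you dismiss case (c). The pointwise claim $\chi_{P_1}(e)+\chi_{P_2}(e)=\chi_P(e)+\chi_Q(e)$ does \emph{not} follow from the two set identities: they only give the cell-wise statement that each cell lies in $P_1$ and $P_2$ together exactly as often as in $P$ and $Q$ together, and the boundary indicator $\chi$ is an exclusive-or of the two incident cells, not a linear function of them. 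Concretely, consider an edge both of whose incident cells lie in $P\setminus Q$, one belonging only to $P_1$ and the other only to $P_2$ (every cell of $P\setminus Q$ lies in exactly one $P_i$, since $P_1\cap P_2\subseteq Q$). This edge is a boundary edge of $P_1$ and of $P_2$, but of neither $P$ nor $Q$: the left side gets $2$, the right side $0$. Symmetrically, an edge between a cell of $P\setminus Q$ belonging only to $P_2$ and a blocked cell of $Q\setminus P$ (which necessarily lies in $P_2$, because $P_1\subseteq P$ and $P_1\cup P_2=P\cup Q$) contributes $2$ to the right side and $0$ to the left. Your justification in (c), ``a cell free in $P$ or in $Q$ is free in the corresponding $P_i$,'' is beside the point: what matters for $\chi_{P_1}(e)$ is membership in $P_1$ specifically, and the first configuration is an interior edge of $P$ that is a boundary edge of \emph{both} subpolygons.

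So the lemma is not a purely combinatorial consequence of the two set identities; it needs the geometric input that the interface between $P_1$ and $P_2$, and the contact between the free part of $P_2$ and the blocked cells of $Q$, are confined to $Q$, where the cells are shared. Precisely, one must show that no cell of $P_2\setminus(P_1\cup Q)$ is adjacent to a cell of $P_1\setminus P_2$ or to a cell of $Q\setminus P$; under that separation property your table of cases does close (edges between $P_1\setminus P_2$ and $Q\setminus P$ are harmless, contributing $2$ to both sides). You correctly flag this seam condition as ``the actual obstacle,'' but you defer it rather than prove it, and without it the identity you propose to sum over is simply false, so the argument as written does not go through. For comparison, the paper's proof buries the same issue in the word ``obviously'': the modular identity $E(A)+E(B)=E(A\cup B)+E(A\cap B)$ fails for arbitrary grid polygons (two disjoint cells sharing an edge give $8\neq 6$), the error being twice the number of edges joining $A\setminus B$ to $B\setminus A$; it is legitimate here only because of the same confinement-to-$Q$ property of the construction. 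To repair your proof, either establish that separation property explicitly (this is the content of the paper's remark that the two subpolygons overlap, and the robot's path is altered, only inside $Q$), or work with the global inclusion--exclusion identity including its correction terms and show that the corrections cancel.
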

\begin{proof}
Obviously, two arbitrary polygons $P_1$ and $P_2$ always satisfy
$$E(P_1) + E(P_2) = E(P_1\cup P_2) + E(P_1\cap P_2)\;.$$

Let $Q' := P_1 \cap P_2$. Note that $Q'$ is not necessarily the same
as $Q$, see, for example, \reffig{figCells/Decomposition2}. With 
$P_1\cup P_2=P\cup Q$ we have
\begin{eqnarray*}
E(P_1) + E(P_2) & = & E(P_1\cap P_2) + E(P_1 \cup P_2)\\
& = & E(Q') + E(P\cup Q)\\
& = & E(Q') + E(P) + E(Q) - E(P\cap Q)\\
& = & E(P) + E(Q)
\end{eqnarray*}
The latter equation holds because $Q'=P\cap Q$.
\end{proof}

Finally, we need an upper bound for the length of a path inside a grid polygon.

\begin{lem}{shortest}
Let $\pfad$ be the shortest path between two cells 
in a grid polygon $P$.
The length of $\pfad$ is bounded by
$$\len{\pfad} \leq \frac12 E(P) -2\;.$$
\end{lem}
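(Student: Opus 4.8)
The plan is to bound the length of a shortest path between two cells $a$ and $b$ by a quantity that can in turn be charged against the perimeter $E(P)$. The key observation is that a shortest path in a grid polygon never visits a cell twice, and moreover it never makes two turns in the same unit square; intuitively, a shortest path is ``taut''. I would first argue that the shortest path $\pfad$ can be routed so that it stays close to the boundary in the following sense: one can always find a shortest path each of whose cells is within bounded distance of the boundary, or — more usefully — one can bound $\len{\pfad}$ directly by counting boundary edges that are ``seen'' by the path.

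Concretely, the approach I would take is an orthogonal‑projection / sweep argument. Place coordinates so that $\pfad$ runs between $a$ and $b$; a shortest path is $xy$‑monotone only in degenerate cases, so instead I would decompose $\pfad$ into maximal straight (horizontal or vertical) segments $s_1,\dots,s_m$. For each maximal segment $s_i$ of length $\ell_i$, I claim one can injectively assign to the cells of $s_i$ a set of at least $\ell_i$ boundary edges of $P$: since the segment cannot be extended (otherwise the path would not be shortest, or the next cell would be blocked), each unit step along $s_i$ has a blocked neighbour on one fixed side (the side the path is ``hugging''), contributing one boundary edge per step. Summing over all segments gives $\len{\pfad} = \sum_i \ell_i \le$ (number of boundary edges charged), and by taking the outer boundary traversal one shows each boundary edge is charged at most once, giving $\len{\pfad}\le E(P)$. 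To get the sharper bound $\tfrac12 E(P)-2$ I would exploit that a path hugging the boundary on one side actually uses only ``half'' of the perimeter: traverse the boundary cycle of $P$ clockwise, and observe that the path, being a shortest $a$–$b$ path, can be taken to lie in (a neighbourhood of) one of the two boundary arcs between $a$ and $b$; the shorter arc has at most $\tfrac12 E(P)$ edges, and a careful accounting of the two endpoints (where no blocked side‑neighbour need exist) removes the additive~$2$.

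The main obstacle I expect is making the ``charging'' rigorous when the shortest path is genuinely two‑dimensional rather than monotone — i.e. when it spirals or zig‑zags through a wide region. In a thick $2\times2$ block a shortest path does not hug any wall, so the per‑step charging to boundary edges must be set up globally (via a planarity/Jordan‑curve argument on $P$) rather than locally segment by segment. I would handle this by invoking that a shortest path, together with a boundary arc between its endpoints, bounds a subregion $P'\subseteq P$; applying \reflem{offset}‑style edge counting (or directly O'Rourke's reflex‑vertex identity, as used in \reflem{offset}) to $P'$ shows the path length is at most the number of boundary edges of $P'$ not on $\pfad$, which is at most $\tfrac12 E(P)$, and the $-2$ comes from the two convex corners that any orthogonal region must have in excess. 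The fallback, if the clean argument is elusive, is an induction on the number of cells of $P$ that peels off a boundary ``ear'' and tracks how both $\len{\pfad}$ and $E(P)$ change — but I expect the direct boundary‑arc argument to be cleaner.
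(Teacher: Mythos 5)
Your first charging scheme (one blocked side-neighbour per step of a maximal straight segment) fails in thick regions, as you yourself concede, so everything rests on your fallback, and the fallback has a genuine gap: the inequality ``shortest path length $\leq$ number of boundary edges of the (shorter) boundary arc'' is simply false. Take a corridor of width~1 that spirals, so that all turns are in the same direction, with the two corridor ends as the cells $s$ and $t$. The shortest path must use every cell, so it has $C-1=\frac12 E(P)-2$ steps (here $E=2C+2$), while the wall on the inside of the turns has only about $\frac12 E(P)-t$ unit edges, where $t$ is the number of turns: at each same-direction turn the outer wall gains two edges relative to the inner wall. So once there are more than a couple of turns, the path is strictly longer than the edge count of the shorter arc, and no choice of arc saves the chain ``path $\leq$ arc edges $\leq \frac12 E(P)$''. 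The underlying reason is that, along a single arc, every reflex corner costs a path cell without yielding an arc edge; this deficit is compensated only globally, over \emph{both} sides of the boundary, by O'Rourke's identity (four more convex than reflex vertices --- four, not the ``two convex corners in excess'' you invoke to produce the $-2$). Your auxiliary claim that a shortest path ``can be taken to lie in a neighbourhood of one of the two boundary arcs'' is likewise unjustified (think of a deep notch between the endpoints) and, fortunately, unnecessary.

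The paper's proof avoids the per-arc trap: assume w.l.o.g.\ that both endpoints lie in the first layer, and consider the two boundary-following first-layer paths $\pfad_L$ (clockwise) and $\pfad_R$ (counterclockwise). By the same turn-counting as in \reflem{offset}, their \emph{combined} length satisfies $\len{\pfad_L}+\len{\pfad_R}\leq E(P)-4$; since the shortest path is no longer than either of them, it is at most their average, i.e.\ at most $\frac12 E(P)-2$. This ``sum over both sides, then halve'' step is exactly what is missing from your argument; with it, your subregion-$P'$ formulation collapses back to the paper's proof, and without it the bound cannot be extracted from a single arc.
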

\begin{proof}
\Wlog\ we can assume that the start cell, $s$, and the target cell, $t$, of 
$\pfad$ belong to the first layer of $P$,
because we are searching for 
an upper bound for the shortest path between two arbitrary cells. 

Observe the path $\pfad_L$ from $s$ to $t$ in the first layer that follows
the boundary of $P$ clockwise and the path $\pfad_R$ that follows
the boundary counterclockwise. The number of edges along 
these paths is at least four greater than the number of cells
visited by $\pfad_L$ and $\pfad_R$ using an argument similar to the
proof of \reflem{offset}. Therefore we have:
$$\len{\pfad_L} + \len{\pfad_R} \leq E(P)-4.$$

In the worst case, both paths have the same length, so
$\len{\pfad(s,t)} = \len{\pfad_L} = \len{\pfad_R}$ holds. With
this we have
$$2\cdot\len{\pfad(s,t)} \leq E(P)-4 \;\Longrightarrow\;  
\len{\pfad(s,t)} \leq \frac12 E(P)-2.$$
\end{proof}

\noindent
Now, we are able to show our main theorem:
\begin{theo}{SDFSmain}
Let $P$ be a simple grid polygon with $C$ cells and $E$ edges. $P$ can
be explored with
$$S\leq C+\frac12 E-3$$
steps. This bound is tight.
\end{theo}
\begin{proof}
$C$ is the number of cells and thus a lower bound on the number of steps that
are needed to explore the polygon $P$. We show by an induction 
on the number of components 
that $\excess(P) \leq \frac12 E(P)-3$ holds.

For the induction base we consider a polygon without any split cell:
\smartDFS\ visits each cell and returns on the shortest path 
to the start cell. Because there is no polygon split, all cells of $P$ 
can be visited by a path of length $C-1$. 
By \reflem{shortest} the shortest path back to the start cell
is not longer than $\frac12 E -2$; thus,
$\excess(P)\leq \frac12 E(P)-3$ holds.

Now, we assume that there is more than one component during the 
application of \smartDFS. 
Let $c$ be the first split cell detected in $P$. 
When \smartDFS\ reaches $c$, two new components, $K_1$ and $K_2$, occur.  
We consider the two polygons $P_1$ and $P_2$ defined as earlier,
using the square $Q$ around $c$.

\Wlog\ we assume that $K_2$ is recursively explored first with
$c$ as start cell.
After $K_2$ is completely explored, \smartDFS\ proceeds with 
the remaining polygon.
As shown in \reflem{component} we have
$$\excess(P)\leq \excess(P_1)+ \excess(K_2\cup\{c\})+1\; .$$
Now, we apply the induction hypothesis to $P_1$ and $K_2\cup\{c\}$
and get
$$\excess(P)\leq \frac12 E(P_1)-3+\frac12 E(K_2\cup\{c\})-3 +1\; .$$
By applying \reflem{offset} to the $q$-offset $K_2\cup\{c\}$ of $P_2$ 
we achieve 
\begin{eqnarray*}
\excess(P) 
& \leq & \frac12 E(P_1)-3+\frac12(E(P_2)-8q)-3 +1\\
& = &  \frac12( E(P_1)+E(P_2) ) -4q -5 \; .
\end{eqnarray*}
From \reflem{edgecount} we conclude $E(P_1)+E(P_2)\leq E(P) + 4(2q+1)$.
Thus, we get $\excess(P)\leq  \frac12 E(P)-3$.

In \refsect{compcomp} we have already seen that the bound is 
exactly achieved in polygons that do not contain any $2\times 2$-square
of free cells.
\end{proof}  

\paragraph*{Competitive Factor}~\\
So far we have shown an upper bound on the number of steps needed
to explore a polygon that depends on the number of cells and
edges in the polygon. Now, we want to analyze \smartDFS\ 
in the competitive framework.

Corridors of width 1 or 2 play a crucial role in the following,
so we refer to them as {\em narrow passages}. More precisely, 
a cell, $c$, belongs to a narrow passage, if $c$ can be removed without
changing the layer number of any other cell. 

It is easy to see that narrow passages are explored optimally:
In corridors of width 1 both \smartDFS\ and the optimal strategy
visit every cell twice, and in the other case both strategies visit
every cell exactly once. 

We need two lemmata to show a competitive factor for \smartDFS.
The first one gives us a relation between the number of cells 
and the number of edges for a special class of polygons.

\begin{lem}{NoNarrow}
For a simple grid polygon, $P$, with $C(P)$ cells and $E(P)$ edges, and without
any narrow passage or split cells in the first layer, we have
$$E(P) \leq \frac23\,C(P)+6\,.$$
\end{lem}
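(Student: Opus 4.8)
The plan is to relate $E(P)$ and $C(P)$ by charging edges against cells, using the structural hypotheses (no narrow passage, no split cell in the first layer) to guarantee that the cells of the first layer are "thick enough" that each boundary edge can be paid for by a bounded fraction of a cell. First I would invoke O'Rourke's relation, as in the proof of \reflem{offset}: for the (rectilinear) boundary of $P$ the number of convex (left) turns exceeds the number of reflex (right) turns by exactly $4$. This already controls the global shape; the remaining work is local.

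Next I would argue that the absence of narrow passages forces every boundary cell to have a "partner" — more precisely, the first layer locally has width at least $3$ (width $1$ and width $2$ corridors are exactly the narrow passages that were excluded, and a split cell in the first layer is the other way a thin neck can occur). Hence I would set up an amortized counting scheme: walk clockwise around the boundary and, for each boundary cell $c$ contributing boundary edges, assign to $c$ a group of nearby cells from the first and second layers whose total size is at least $\frac{3}{2}$ times the number of boundary edges $c$ contributes, making sure the groups assigned to distinct boundary cells are disjoint (or overlap in a controlled, bounded way). Summing gives $C(P) \ge \frac{3}{2}(E(P) - O(1))$, and solving for $E(P)$ yields $E(P) \le \frac{2}{3}C(P) + 6$ once the additive constant is pinned down by the turn count. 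The clean way to handle the constant is to verify the inequality on a base case — e.g. a rectangle of width $\ge 3$, where $E = 2(a+b)$ and $C = ab$ with $a,b \ge 3$ — and then show each local feature (a straight run of boundary, a convex corner, a reflex corner) changes the two sides consistently with the claimed slope $\frac{2}{3}$ and the surplus $+6$ coming precisely from the four-extra-convex-corners fact.

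The main obstacle I anticipate is making the charging scheme genuinely injective near corners and near places where the second layer is itself thin or disconnected: a reflex corner of $P$ can make two boundary stretches come close together, and one must ensure their charged cell-groups do not collide. The hypothesis that there is no split cell in the first layer is what should rescue this — it prevents the second layer from pinching off in a way that would starve a boundary cell of its partners — but spelling out exactly why requires care. I would therefore spend most of the effort on a case analysis of the possible local configurations of two consecutive boundary edges (collinear, convex turn, reflex turn) together with the cells directly behind them, showing in each case that $\frac{3}{2}$ cells per edge can be allocated disjointly; the extremal configuration is the straight corridor of width exactly $3$, where $E = \frac{2}{3}C + O(1)$ holds with equality in the leading term, which is reassuring since that corridor is also the tight example mentioned just before the lemma for the $\frac{4}{3}$-competitive bound.
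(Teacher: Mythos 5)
Your proposal is a plan rather than a proof, and the plan's central object --- the charging scheme --- is never actually constructed. You state the properties it must have (each boundary edge receives $\frac32$ cells from layers one and two, the groups are disjoint up to a bounded total overlap, the additive $+6$ is accounted for by the four excess convex corners), but the two places where such a scheme could genuinely fail are exactly the places you defer: (a) disjointness where two stretches of boundary come close to one another behind a reflex corner or across a width-3 or width-4 part of the polygon, and (b) the precise way the ``no split cell in the first layer'' hypothesis prevents the second layer from being too thin or disconnected to supply the charged cells. You name both difficulties yourself and say they ``require care'' and would absorb ``most of the effort,'' which is an accurate self-assessment: without that case analysis carried out, the inequality $C(P)\geq \frac32 E(P)-9$ is not established, and the bookkeeping that turns ``four more convex than reflex corners'' into exactly the slack $+6$ is asserted, not derived. (A smaller issue: it is the polygon, not the first layer, that the hypotheses force to have local width at least $3$; the first layer is by definition one cell thick.)

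For comparison, the paper avoids amortized charging altogether and argues by a peeling induction: because there is no narrow passage and no split cell in the first layer, one can always remove a full boundary row or column containing at least three cells while staying in the same class of polygons; each such removal deletes at least $3$ cells and at most $2$ edges, and the process terminates in a $3\times 3$ block, where $E=\frac23 C+6$ holds with equality. Reversing the process, every addition increases $C$ by at least $3$ and $E$ by at most $2$, and since $\frac23\cdot 3=2$ the inequality $E\leq\frac23 C+6$ is preserved at every step. This sidesteps precisely the disjointness problem your scheme would have to solve globally, replacing it by a single local existence claim (a removable row or column always exists). If you want to salvage your route, the width-3 corridor you identify as extremal is the right guide, but you would still need to exhibit the allocation explicitly and verify it at reflex corners and at junctions where corridors meet --- that is the missing content.
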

\begin{proof}
Consider a simple polygon, $P$. We successively remove a row or column
of at least three boundary cells, maintaining our assumption that
the polygon has no narrow passages or split cells in the first layer.
These assumptions ensure that we can always find such a row or column
(i.e., if we cannot find such a row or column, the polygon has a narrow
passage or a split cell in the first layer).
Thus, we remove at least three cells and at most two edges. This decomposition
ends with a $3\times 3$ block of cells that fulfills $E=\frac23C+6$.
Now, we reverse our decomposition; that is, we successively add all rows and
columns until we end up with $P$. In every step, we add at least three
cells and at most two edges. Thus, $E \leq \frac23C+6$ is fulfilled in
every step.
\end{proof}

\pstexfig{For polygons without narrow passages or split cells in the first 
layer, the last explored cell, $c'$, lies in the 1-offset, $P'$ (shaded).}
{figCells/sdfscomplayer}

For the same class of polygons, we can show that \smartDFS\ behaves
slightly better than the bound in \reftheo{SDFSmain}.

\begin{lem}{StepNoNarrow}
A simple grid polygon, $P$, with $C(P)$ cells and $E(P)$ edges, and without
any narrow passage or split cells in the first layer can be 
explored using no more steps than
$$S(P) \leq C(P) +\frac12E(P) - 5\,.$$
\end{lem}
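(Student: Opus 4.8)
The plan is to peel off the first layer of $P$ and reduce everything to \reftheo{SDFSmain} applied to the $1$-offset $P'$ of $P$; the two extra steps in the improved bound will be paid for by the eight surplus edges that \reflem{offset} guarantees for $P'$.

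First I would record what \smartDFS\ does on such a $P$. Since $P$ has no narrow passage, its first layer is a simple cyclic strip of, say, $C_1$ cells; since $P$ has no split cell in the first layer, \smartDFS\ following the left-hand rule walks once around this strip, visiting each of its cells exactly once, and then descends into the second layer at a cell $s'$ that lies within a bounded number of steps of $s$ (see \reffig{figCells/sdfscomplayer}). At that moment the set of still-unvisited cells is precisely the $1$-offset $P'$, which -- again using that there is no narrow passage -- is a single simple grid polygon, with $C(P')=C(P)-C_1$ cells and, by \reflem{offset} with $\ell=1$, with $E(P')\leq E(P)-8$ edges. In particular the last cell $c'$ that \smartDFS\ explores lies in $P'$.

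The key step is then to argue that from $s'$ on, \smartDFS\ behaves on $P$ exactly as $\smartDFS(P',s')$ would on $P'$ alone (for the induced initial direction): every decision depends only on already-visited cells and on the local neighbourhood, and the completely visited first layer of $P$ plays the role of a wall enclosing $P'$. This is the same reasoning as in the proof of \reftheo{SDFSmain}, only without the overlap square since the first layer and $P'$ are disjoint. Writing $\pfad_P$ for the tour \smartDFS\ produces on $P$ and $\pfad_{P'}$ for the tour $\smartDFS(P',s')$ produces on $P'$, I split $\pfad_P$ into: the walk around the first layer plus the step to $s'$ (at most $C_1$ steps), then the interior walk from $s'$ to $c'$ (which is the corresponding initial part of $\pfad_{P'}$), then the shortest path from $c'$ back to $s$ in $P$. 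Since $\pfad_{P'}$ ends with the shortest path from $c'$ to $s'$ inside $P'$, and the shortest path from $c'$ to $s$ in $P$ is at most that plus the bounded distance from $s'$ to $s$, I obtain
$$\len{\pfad_P}\leq C_1+\len{\pfad_{P'}}+2,$$
and subtracting $C(P)=C_1+C(P')$ gives $\excess(P)\leq\excess(P')+2$.

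It remains to combine this with the general bound $\excess(P')\leq\frac12 E(P')-3$ from \reftheo{SDFSmain} and with $E(P')\leq E(P)-8$ from \reflem{offset}:
$$\excess(P)\leq\frac12\,(E(P)-8)-3+2=\frac12 E(P)-5,$$
hence $S(P)=C(P)+\excess(P)\leq C(P)+\frac12 E(P)-5$. The part I expect to be the real obstacle is the ``connection cost'': making it precise that traversing the first layer and then re-routing the final return from $s'$ to $s$ together cost at most two additional cell visits -- this is the analogue of the bookkeeping around the square $Q$ in \reftheo{SDFSmain}. Once that is settled, the rest is the arithmetic above together with \reflem{offset}; note in particular that we invoke the general theorem on $P'$ rather than this lemma recursively, since $P'$ may well contain narrow passages or split cells in its own first layer.
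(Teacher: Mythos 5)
Your proposal is correct in substance and rests on the same three facts as the paper's proof: since $P$ has neither narrow passages nor split cells in the first layer, \smartDFS\ finishes the first layer before entering any deeper cell, so the last explored cell $c'$ lies in the 1-offset $P'$; the first cell $s'$ of $P'$ touches $s$, which costs at most $2$ extra steps; and \reflem{offset} gives $E(P')\leq E(P)-8$. The packaging is different, though. The paper does not cut the tour at $s'$ and does not invoke \reftheo{SDFSmain} on $P'$ as a black box; it re-examines the proof of \reftheo{SDFSmain}, notes that the constant $-3$ comes from bounding the final return path by \reflem{shortest} with $\frac12 E(P)-2$, and improves exactly that term: because $c'\in P'$, the return path is at most a shortest path from $c'$ to $s'$ inside $P'$ plus $2$, hence at most $\frac12 E(P')-2+2\leq\frac12 E(P)-4$, two steps better, with the rest of the analysis untouched. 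Your route buys the clean modular inequality $\excess(P)\leq\excess(P')+2$, but it pays with the claim that, after entering $s'$, the run of \smartDFS\ on $P$ coincides with a standalone run of \smartDFS\ on $(P',s')$. That claim is stronger than what is needed and not immediate: in $P$ the walks back to the base cell may shortcut through the already visited first layer, and split cells lying in the first layer of the standalone $P'$ are only detected with delay (the layer-1 detection problem), whereas in $P$ the same cells are in layer 2 and detected locally; one would have to argue that the sequence of explored cells is nevertheless identical and the walked lengths in $P$ are at most those in $P'$. So, contrary to your closing remark, the connection cost is the easy part (the paper settles it by the one-line observation that $s$ and $s'$ touch); the behavioural-equivalence claim is where the real work in your version would lie. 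Your arithmetic, $\excess(P)\leq\frac12\bigl(E(P)-8\bigr)-3+2=\frac12 E(P)-5$, agrees with the paper's conclusion.
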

\begin{proof}
In \reftheo{SDFSmain} we have seen that $S(P) \leq C(P) +\frac12E(P) - 3$
holds. To show this theorem, we used \reflem{shortest} on \refpagelem{shortest}
as an upper bound for
the shortest path back from the last explored cell to the start cell.
\reflem{shortest} bounds the shortest path from a cell, $c$, in the
first layer of $P$ to the cell $c'$ that maximizes the distance to 
$c$ inside $P$; thus, $c'$ is located in the first layer of $P$, too.

Because $P$ has neither narrow passages nor split cells in the first layer,
we can explore the first layer of $P$ completely before we visit another
layer, see \reffig{figCells/sdfscomplayer}. 
Therefore, the last explored cell, $c'$, of $P$ is located in the 
1-offset of $P$. Let $P'$ denote the 1-offset of $P$, and $s'$ the first
visited cell in $P'$. Remark that $s$ and $s'$ are at least touching
each other, so the length of a shortest path from $s'$ to $s$ is at 
most 2.
Now, the shortest path, $\pfad$, from $c'$ to $s$ in $P$ is bounded by 
a shortest path, $\pfad'$, from $c'$ to $s'$ in $P'$ and a
shortest path from $s'$ to $s$:
$$|\pfad| \leq |\pfad'| + 2\,.$$
The path $\pfad'$, in turn, is bounded using \reflem{shortest} by
$$|\pfad'| \leq \frac12 E(P')- 2\,.$$
By \reflem{offset} (\refpagelem{offset}), $E(P') \leq E(P) -8$ holds,
and altogether we get
$$|\pfad| \leq \frac12 E(P) - 4\,,$$
which is two steps shorter than stated in \reflem{shortest}.
\end{proof}

\medskip

\noindent
Now, we can prove the following
\begin{theo}{sdfscomp}
The strategy {\smartDFS} is $\frac43$-competitive.
\end{theo}
\begin{proof}
Let $P$ be a simple grid polygon. In the first stage, we remove
all narrow passages from $P$ and get a sequence of (sub-)polygons
$P_i$, $i=1,\ldots,k$, without narrow passages.
For every $P_i$, $i=1,\ldots,k-1$, the optimal strategy 
in $P$ explores the part of $P$ that corresponds to $P_i$ up to the
narrow passage that connects $P_i$ with $P_{i+1}$, enters $P_{i+1}$, and
fully explores every $P_j$ with $j\geq i$. Then it returns to $P_i$ and
continues with the exploration of $P_i$. Further, we already know that
narrow passages are explored optimally. This allows us to 
consider every $P_i$ separately without changing the competitive factor 
of $P$.

Now, we observe a (sub-)polygon $P_i$. We show by induction on the
number of split cells in the first layer that 
$S(P_i) \leq \frac43C(P_i) -2$ holds. 
Note that this is exactly 
achieved in polygons of 
size $3\times m$, $m$ even, see \reffig{figCells/sdfscomp}.

\pstexfig{In a corridor of width 3 and even length,
$S(P)=\frac43\, \rmsub{S}{Opt}(P)-2$ holds.}
{figCells/sdfscomp}

If $P_i$ has no split cell in the first layer (induction base), 
we can apply 
\reflem{StepNoNarrow} and \reflem{NoNarrow}:
\begin{eqnarray*}
S(P_i) &\leq & C(P_i)+\frac12\,E(P_i)-5 \\
& \leq & C(P_i)+\frac12\left(\frac23\, C(P_i)+6\right)-5\\
& = & \frac43\, C(P_i) -2\,.
\end{eqnarray*}

\pstexfig{Three cases of split cells, (i) component of type II, (ii) 
and (iii) component of type I.}
{figCells/sdfscomp2cases}

Two cases occur if we meet a split cell, $c$, in the first layer, see
\reffig{figCells/sdfscomp2cases}. In the first case, the new component
was never visited before (component of type II, see \refpage{comptypes}).
Here, we define $Q:=\{c\}$.
The second case occurs, because the robot meets a cell, $c'$, 
that is in the first layer and touches the current cell, $c$, 
see for example \reffig{figCells/sdfscomp2cases}(ii) and (iii).
Let $Q$ be the smallest rectangle that contains both $c$ and $c'$.

Similar to the proof of \reftheo{SDFSmain}, we split the polygon
$P_i$ into two parts, both including $Q$. Let $P''$ denote 
the part that includes the component of type I or II, $P'$ the
other part. For $|Q|=1$, see \reffig{figCells/sdfscomp2cases}(i),
we conclude $S(P_i) = S(P')+S(P'')$ and
$C(P_i) = C(P')+C(P'')-1$. Applying the induction hypothesis to
$P'$ and $P''$ yields
\begin{eqnarray*}
S(P_i) & = & S(P')+S(P'')\\
&\leq & \frac43\,C(P')-2 + \frac43\,C(P'')-2\\
&=&  \frac43\,C(P_i)+\frac43-4 \quad < \quad \frac43\,C(P_i)-2\,.
\end{eqnarray*}

For $|Q|\in \{\,2,4\,\}$ we gain some steps by merging the polygons.
If we consider $P'$ and $P''$ separately, we count the steps from $c'$ to
$c$---or vice versa---in both polygons, but in $P_i$ the path from $c'$
to $c$ is replaced by the exploration path in $P''$. Thus, we
have $S(P_i) = S(P')+S(P'')-|Q|$ and 
$C(P_i) = C(P')+C(P'')-|Q|$. This yields
\begin{eqnarray*}
S(P_i) & = & S(P')+S(P'')-|Q|\\
&\leq & \frac43\,C(P')-2 + \frac43\,C(P'')-2-|Q|\\
&=&  \frac43\,C(P_i)+\frac13\,(|Q|-6)-2 \quad < \quad \frac43\,C(P_i)-2\,.
\end{eqnarray*}

The optimal strategy needs at least $C$ steps, which, altogether, yields a 
competitive factor of $\frac43$. 
\end{proof}

\newpage

\paragraph*{Split Cell Detection}~\\
\pstexfig{(i) Detecting a split cell, (ii) and (iii) a polygon split occurs 
in layer 1.}
{figCells/splitDetection}

Finally, we describe the {\em detection} of a split cell. In the first 
instance, let us assume that the robot already moves in 
a layer $\ell > 1$.
In \refsect{SmartDFSstrat} we defined that a split cell divides the
graph of unexplored cells into two parts when the split cell
is visited. Because the polygon is simple, we can determine a global
split using a local criterion. We observe the eight cells
surrounding the current robot's position. If there is more than
one connected set of visited cells in this block, the
current robot position is obviously a split cell, see
\reffig{figCells/splitDetection}(i).
Remark that we can execute this test although the robot's sensors
do not allow us to access all eight cells around the current position.
We are interested only in visited cells, so we can use the
robot's map of visited cells for our test in layers $\ell > 1$.
Unfortunately, this test method fails in layer 1, because the robot
does not know the ``layer 0'', the polygon walls.
However, we want to visit the component that has no visited cell
in the current layer (type II) first; therefore, a step that
follows the left-hand rule is correct. The strategy behaves
correctly, although we do not report the splitcell explicitly.
See, for example, \reffig{figCells/splitDetection}(ii) and (iii):
In both cases the polygon split cannot be detected in $c$, because
the cell marked with '?' is not known at this time. The split
will be identified and handled correctly in $c'$.

\newpage
\sect{Exploring Polygons with Holes}{cellexplore}
In an environment with obstacles ({\em holes}) it is not obvious how
to detect and to handle split cells.
When a polygon split is detected, the robot may be far away from the 
split cell, because it had no chance to recognize the split before reaching
its current position.
For example, in \reffig{figCells/SplitInCellExpl}(i) the robot has 
surrounded one single obstacle and $c$ is a split cell,
whereas in (ii) there are two obstacles and $c$ is no split cell.
Both situations cannot be distinguished until the cell $c'$ is
reached.
So we use a different strategy to explore environments with obstacles.

\pstexfig{In an environment with obstacles, the robot may detect a 
split on a position far away from the splitcell, (i) $c$ was a 
split cell, (ii) $c$ was no split cell.}{figCells/SplitInCellExpl}

\ssect{An Exploration Strategy}{CellExplStrat}
The basic idea of our strategy, \Strategyname,
is to {\em reserve} all cells right 
to the covered path for the way back. As in \smartDFS\ we use the
left-hand rule; that is, the robot proceeds keeping the polygon's boundary 
or the reserved cells on its left side.
\Strategyname\ uses two modes. In the
forward mode the robot enters undiscovered parts of the polygon, and
in the backward mode the robot leaves known parts, see
\refalgo{cellexplore} on \refpagealgo{cellexplore}.
We require that the robot starts with
its back to a wall. 

\newpage 

\pstexfig{Example of an exploration tour produced by CellExplore 
(Screenshot using \cite{hiklm-gaesu-00}; the white cells are holes, dark
gray cells are reserved).}
{figCells/examples}

\reffig{figCells/examples} 
shows an example of an exploration tour.
The robot starts in J1 and explores the polygon in the forward mode
until F8 is reached. There, the robot switches to the backward mode
and explores the reserved cells F7--F5. The path from F5 to H5 is
blocked by the hole in G5, so the robot walks on the cells F4--H4
which have been visited already in the forward mode. In H5 the
robot discovers the unreserved and unexplored cell H6,
switches back to the forward mode and explores the cells H6--H8.
Note that no cells can be reserved in this case, because 
the cells I6--I8 have already been reserved during the exploration
of J8--J6. Therefore, the robot walks the same path back to H5 and
continues the return path in the backward mode. In the forward mode,
the robot could not reserve a cell from H3, so we move via
H4 to I4 and proceed with the return path in the backward mode.
The cells D9, C2 and G2 are blocked, so the robot has to circumvent
these cells using visited cells. In C5 another unreserved and unexplored cell 
is discovered, so we switch to the forward mode and visit C4.

\newpage
\mbox{}

\vfill
\begin{algorithm}[h]
\AlgoCaption{CellExplore}{cellexplore}
\begin{description}
\item[Forward mode:]~
  \begin{itemize}
  \item The polygon is explored following the left-hand rule:
    For every entered cell the robot tries to extend its path to
    an adjacent, unexplored, and unreserved cell, preferring a step to
    the left\footnotemark\
    over a straight step over a step to the right.
  \item All unexplored and unreserved cells right to the covered path 
    are reserved for the return path by pushing them onto a stack.\footnotemark\
    If no cell right to the robot's current position can be reserved---because
    there is a hole or the corresponding cell is already reserved
    or explored---the robot's position is pushed onto the stack for the 
    return path.
  \item Whenever no step in the forward mode is possible, the strategy enters the
    backward mode.
  \end{itemize}

\item[Backward mode:]~
  \begin{itemize}
  \item The robot walks back along the reserved return path.
  \item Whenever an unexplored and unreserved cell appears adjacent to the
    current position, the forward mode is entered again.
  \end{itemize}
\end{description}
\end{algorithm}

\vfill
\mbox{}
 \addtocounter{footnote}{-1}
 \footnotetext{A ``step to the left'' or ``turn left'' means that the robot 
  turns $90^\circ$ counterclockwise and moves one cell forward. Analogously
  with ``step to the right'' or ``turn right''.}
 \addtocounter{footnote}{1}
 \footnotetext{If the robot turns left, we reserve three cells: right hand,
straight forward, and forward-right to the robot's position.
  Note that we store the markers only in the robot's memory. This allows
  us to reserve cells that only touch the current cell, even if we are
  not able to determine whether these cells are free or blocked.}

\pstexfig{A polygon with $C=69, \frac{E}2=52, H=1, \WCW=2,
S=124=C+\frac12E+3H+\WCW-2$.
The return path in this polygon cannot be shortened.}
{figCells/CellExplSPTight}

\bigskip

A straightforward improvement to the strategy
\Strategyname\ is to use in the backward mode
the shortest path---on the cells known so far---to the first
cell on the stack that is unexplored or has unexplored neighbors
instead of walking back using every reserved cell,
see the first improvement of DFS.
From a practical point of view, this improvement is
very reasonable, because the performance of the strategy increases in
most environments.
Unfortunately, the return path (\IE, the path walked in the backward mode)
is no longer determined by a local configuration of cells. Instead, we
need a global view, which complicates the analysis of this strategy.
However, there are polygons that
force this strategy to walk exactly the same return path
as \Strategyname\ without any optimization,
see \reffig{figCells/CellExplSPTight}, so this idea does not improve
the worst case performance, and the upper bound for the number of steps
is the same as in \reftheo{schrankefinal}.

\ssect{The Analysis of \Strategyname}{CellExplAnalyse}
\begin{smallfig}{figCells/assumption}
We analyze \Strategyname\ in three steps: First, we
analyze the single steps of the strategy with a local view and with one
assumption concerning the robot's initial position, see the figure
on the right. 
This results in a bound that depends on the number of left turns
that are needed to explore the polygon.
Then we discard the assumption, and finally we consider some global arguments
to replace the number of left turns with parameters of the polygon.
\end{smallfig}

\noindent
To analyze our strategy \Strategyname\ we use the following observations:
\begin{itemize}
\item \Strategyname\ introduces a dissection of all cells 
  into cells that are explored in the forward mode, denoted by $\MFWMode$, 
  and cells that are explored in the backward mode, $\MBKMode$:
  $\MZellen = \MFWMode \stackrel{\bullet}{\cup} \MBKMode$.
\item All cells that are explored in the forward mode can be
  uniquely classified by the way the robot leaves them: either it 
  makes a step to
  the left, a step forward or a step to the right.
  $\MFWMode = \MFWMode_L \stackrel{\bullet}{\cup} \MFWMode_F
  \stackrel{\bullet}{\cup} \MFWMode_R.$
\item \Strategyname\ defines a mapping 
  $\varphi: \MBKMode \longrightarrow \MFWMode$, 
  which assigns to every cell $c \in \MBKMode$ one cell $d \in \MFWMode$,
  so that $c$ was reserved while $d$ was visited.
\item There exists a subset $\MSplit \subseteq \MBKMode$
  of the cells that have an unexplored and unreserved neighbor and
  the strategy switches from backward mode to forward mode.
  We call these cells \DEFI{division cells}.
\end{itemize}

\pstexfig{Decomposing a polygon. $\triangle$ denotes the start cell and 
the initial direction. $\Delta C, \Delta E$ and $\Delta S$ denote the
differences in the number of cells, edges, and steps, respectively. $G$ 
denotes the balance.}
{figCells/knabbernbeispiel2}

\clearpage

\pstexfig{Decomposing a polygon. The shaded part shows the reserved cells.}
{figCells/knabbernbeispiel}

We will analyze \Strategyname\ by an induction over the cells in $\MFWMode$.
Starting with the given polygon, $P$, and the given start cell, $s$, we
can define a sequence $P_{k,i}$ of polygons $(P_{0, 0} := P)$ with start cells 
$s_{k, i}$ as follows:
$P_{k, i+1}$ arises from $P_{k, i}$ by removing the start cell
$s_{k, i}$ and all cells that are reserved in the first step
in $P_{k, i}$ (\IE, every cell $c$ with $\varphi(c) = s_{k, i}$).
The start cell $s_{k, i+1}$ in the new polygon $P_{k, i+1}$
is the cell that the robot enters with its first step in $P_{k, i}$,
see \reffig{figCells/knabbernbeispiel}. The reserved cells in this and all
following figures are shown shaded; $\triangle$ denotes the
start cell.

There is nothing to consider when the strategy enters the backward mode,
because we remove all cells that are explored in the backward mode
together with the forward cells.
But what happens, if a division cell occurs; that is, the strategy switches
from the backward mode to the forward mode?

\begin{lem}{zerfall}
If one of the cells reserved in the first step in $P_{k, i}$
is a division cell, then $P_{k, i}$ is split by
removing $s_{k,i}$ and all cells that are reserved in this step (\IE,
all cells $c$ with $\varphi(c)=s_{k,i}$)
into two or more not-connected components.
\end{lem}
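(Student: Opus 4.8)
The plan is to prove the statement directly: assuming that some cell $d$ reserved in the first step in $P_{k,i}$ is a \emph{division cell}, I will exhibit two cells of $P_{k,i}\setminus(\{s_{k,i}\}\cup R)$ --- where $R:=\{c : \varphi(c)=s_{k,i}\}$ is the set of cells reserved in that first step --- that cannot be joined by a path inside this set; since by the earlier construction $P_{k,i+1}$ is precisely $P_{k,i}\setminus(\{s_{k,i}\}\cup R)$, this shows the removal splits $P_{k,i}$. The first of the two cells is $s_{k,i+1}$, the cell the robot enters with its first step in $P_{k,i}$; it is neither reserved nor equal to $s_{k,i}$, so it lies in the set. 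The second cell, call it $e$, comes from the division property of $d$: when the robot, during the exploration of $P_{k,i}$, later reaches $d$ in the backward mode and switches back to the forward mode, it does so because some neighbour $e$ of $d$ is at that moment still unexplored and unreserved. Being unreserved, $e\notin R$, and since $s_{k,i}$ has long been visited, $e\neq s_{k,i}$; thus $e$ is a cell of $P_{k,i}\setminus(\{s_{k,i}\}\cup R)$.

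The core of the argument is a stack/timing observation. The cells of $R$ are pushed onto the return stack during the very first step in $P_{k,i}$, so they sit at its bottom; by the LIFO discipline of the backward mode, the robot traverses the cells of $R$ only after every later-reserved cell has been popped. Until that moment the robot never steps onto a cell of $\{s_{k,i}\}\cup R$: in the forward mode it only enters unexplored, unreserved cells, and in the backward mode it walks on reserved cells lying strictly above $R$ on the stack. Hence, between its first step and its final traversal of $R$, the robot stays inside the connected component of $P_{k,i}\setminus(\{s_{k,i}\}\cup R)$ that contains $s_{k,i+1}$; the cells of $R$ (and $s_{k,i}$) act, via the left-hand rule, exactly like boundary cells and are never crossed, so the robot cannot reach any other component without first re-entering $R$ or $s_{k,i}$. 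Since \Strategyname\ explores every cell of the connected region it works on, every cell of this component is visited by the time the robot returns to $R$. But $e$ is still unexplored then, so $e$ lies in a \emph{different} component. Therefore $P_{k,i}\setminus(\{s_{k,i}\}\cup R)$ decomposes into two or more components, as claimed.

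I expect the main obstacle to be making the claim ``the whole component of $s_{k,i+1}$ is explored before the robot returns to $R$'' fully rigorous from the definition of \Strategyname. It rests on two facts that must be spelled out: that a cell once reserved is never entered in the forward mode and is re-entered in the backward mode only in reverse order of reservation --- so the bottom-of-stack cells of $R$ are revisited last --- and that the forward mode, obeying the left-hand rule with reserved cells treated like walls, provably cannot step across a reserved cell, so no path the robot actually walks crosses the barrier $\{s_{k,i}\}\cup R$ before the very end. Combining these, the robot explores all of the component containing $s_{k,i+1}$, and reserves its backward cells (all of which rank above $R$ on the stack), before it pops any cell of $R$. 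A minor side point is the degenerate reservation case, in which the robot pushes its own position rather than a right-hand cell; this cannot occur in the first step here, since $R$ is assumed to contain the division cell $d$ and hence is non-empty.
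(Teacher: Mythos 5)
Your overall skeleton (exhibit two witness cells, $s_{k,i+1}$ and the unexplored neighbour $e$ of the division cell $d\in R$, and show they lie in different components of $P_{k,i}\setminus(\{s_{k,i}\}\cup R)$) is fine, but the pivotal step is not actually proved, and the two facts you offer in its support do not yield it. The facts ``cells of $R$ sit at the bottom of the stack and are popped last'' and ``reserved cells are never entered in forward mode'' only give \emph{containment}: the robot stays inside the component $K$ of $s_{k,i+1}$ until it first steps onto $R$. They do not give \emph{coverage}, i.e.\ the claim that every cell of $K$ has been explored by that moment. That coverage claim is precisely the hard part: in the case that $P_{k,i+1}$ were connected it is exactly the contrapositive of the lemma (all of $P_{k,i+1}$ explored before any cell of $R$ is popped, hence no division cell in $R$), so deducing it from ``\Strategyname\ explores every cell of the connected region it works on'' begs the question -- ordinary completeness only guarantees that everything is explored by the end of the whole tour, not before the stack has run down to $R$; a priori the robot could descend to $R$ while a pocket of $K$ is still unexplored and clean it up via a division at an $R$-cell, which is exactly the scenario the lemma must exclude.

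The paper closes this gap with a different, direct argument on the connections themselves: it takes $c_j$ to be the \emph{first} cell on the return path at which the new part is entered, notes that any other cell still connecting the two parts in $P_{k,i}$ must have been reserved after $\varphi(c_j)=s_{k,i}$ and is therefore, by the LIFO discipline, popped -- and the division discovered -- \emph{before} $c_j$, contradicting the choice of $c_j$; hence $c_j$ is the last connection and removing $s_{k,i}$ together with $R$ disconnects $P_{k,i}$. To repair your proof you would need essentially this reasoning anyway: you must argue that an unexplored cell of $K$ surviving until the robot reaches $R$ would be reachable from the explored region only through a cell reserved after the first step, which is popped before $R$ and would have triggered an earlier switch to forward mode. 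As it stands, the central claim you flag as ``the main obstacle'' is the lemma itself in disguise, so the proposal has a genuine gap.
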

\begin{proof}
Consider two components, $P_1$ and $P_2$, that are connected in $P$
by some cells $\MX \subset \MBKMode$. 
Let $c_j$ be the first cell in $\MX$ that is discovered on the return path,
thus, $P_2$ is entered via $c_j$. 
In our successive decomposition, $\varphi(c_j)$ is the start cell
of a polygon $P_{k, i}$. In $P_{k, i}$ all cells explored before 
$\varphi(c_j)$ are already removed. 
If there would be another connection, $c_\ell$, between $P_1$ and $P_2$ 
at this time, $c_\ell$ would be discovered before
$c_j$ on the return path and $P_2$ would have been entered via $c_\ell$ in 
contradiction to our assumption that $P_2$ is entered via $c_j$.
Thus, $c_j$ must be the last cell that connects $P_1$ and $P_2$.
\end{proof}

If one or more of the cells to be removed are division cells, 
the polygon $P_{k, i}$ is divided into subpolygons 
$P_{k+1, 0}, P_{k+2, 0}, \ldots$ that are analyzed separately,
see \reffig{figCells/splitbeispiel}. See 
\reffig{figCells/knabbernbeispiel2} for
a more comprehensive example for the successive
decomposition of a polygon.

\pstexfig{Handling of division cells.}{figCells/splitbeispiel}

\bigskip\noindent
Now, we are able give a first bound for the number of steps that
\Strategyname\ uses to explore a polygon.

\begin{lem}{schranke1}
Let us assume that the cell behind and right hand to 
the robot's position\footnote{In other words, the cell southeast to 
the robot's current position if the current direction is {\em north}.}
is blocked.
The number of steps, $S$, used to explore a polygon with $C$ cells,
$E$ edges and $H$ holes, is bounded by
$$ S \leq\ C + \Ehalbe + H + 2L - 3,$$
where $L$ denotes the number of the robot's left turns.
\end{lem}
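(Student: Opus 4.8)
The plan is to prove the inequality by induction on the number of cells explored in the forward mode, $|\MFWMode|$ -- equivalently, on the recursion depth of the successive decomposition $P_{0,0}=P \supset P_{0,1}\supset\cdots$ set up above -- taking as inductive statement exactly the claimed bound, applied to every polygon whose start configuration satisfies the ``blocked behind-right'' assumption. In the base case $|\MFWMode|=1$ the polygon consists of the single start cell, so $C=1$, $E=4$, $H=0$, $L=0$ and $S=0=1+\frac12\cdot 4+0+0-3$, and the bound holds with equality.

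For the inductive step I would look at the robot's very first step in $P:=P_{k,i}$. It leaves the start cell $s:=s_{k,i}$ by a left turn, a straight step, or a right turn (in that order of preference), and simultaneously reserves the cells $\varphi^{-1}(s)$ on its right hand (three cells after a left turn, one after a straight or right step, with $s$ itself pushed in place of any cell that cannot be reserved). Deleting $s$ together with $\varphi^{-1}(s)$ yields the child polygon(s). If none of the deleted cells is a division cell there is a single child $P':=P_{k,i+1}$, and I would introduce the \emph{balance}
$$G\;:=\;\Delta C+\frac12\,\Delta E+\Delta H+2\,\Delta L-\Delta S,$$
where each $\Delta X$ abbreviates $X(P)-X(P')$, and verify $G\ge 0$ by running through the finitely many local pictures fixed by the direction of the first step, by which right-hand cells are free and reservable, and by the shape of the wall around $s$; this is the case bookkeeping sketched in \reffig{figCells/knabbernbeispiel2} and \reffig{figCells/knabbernbeispiel}. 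Since the additive constant $-3$ is identical for $P$ and $P'$, once $G\ge 0$ is in hand the target inequality $S(P)=S(P')+\Delta S\le \big(C(P')+\frac12 E(P')+H(P')+2L(P')-3\big)+\Delta S\le C(P)+\frac12 E(P)+H(P)+2L(P)-3$ follows; in every picture one must moreover check that $P'$ again satisfies the behind-right assumption, so that the induction hypothesis actually applies to it.

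If one of the deleted cells is a division cell, \reflem{zerfall} guarantees that $P$ breaks up into $m\ge 2$ components $P_{k+1,0},\dots,P_{k+m,0}$; I would then apply the induction hypothesis to each of them and add, in the same spirit as \reflem{component} in the simple-polygon analysis. Cell counts are additive up to the deleted cells, left turns split additively among the components and the deleted ``bridge'', and each division cell is charged one extra visit (it is left in backward mode, re-entered to start its new component, and left once more, see \reffig{figCells/splitbeispiel}); the arithmetic closes because the $m-1$ surplus copies of the constant $-3$, together with these extra visits, are paid for by the boundary edges of the deleted bridge cells and by the holes the split exposes -- each hole of $P$ survives inside exactly one component, and a bridge may in addition enclose a fresh hole.

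The part I expect to be the real work is the exhaustive local case analysis of the no-split step: for every admissible first move and every local wall configuration one must pin down the precise values of $\Delta C$, $\Delta E$, $\Delta S$ and $\Delta L$ -- in particular how many cells are genuinely reserved versus how often $s$ is pushed instead, and exactly how the return path threads the deleted reserved cells -- then confirm $G\ge 0$ in each case, and, just as importantly and easy to overlook, confirm that the behind-right assumption is inherited by the child. The split case is delicate only in the edge- and hole-counting; with \reflem{zerfall} available its combinatorics are routine.
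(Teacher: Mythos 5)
Your overall framework is the same as the paper's: induction along the successive decomposition $P_{k,i}$, base case a single cell with $S=0=C+\frac12 E+H+2L-3$, a local balance $\Delta C+\frac12\Delta E+\Delta H+2\Delta L-\Delta S\geq 0$ for the non-splitting first steps (the paper keeps the $2L$ credit separate but uses it exactly to absorb the $-2$ balances of left turns, and it treats right turns as chains of forward steps), and the requirement that the behind-right condition be inherited by the child polygon, which the paper isolates as \reflem{firststep}. Up to that point your plan matches the actual proof.

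The genuine gap is in the division-cell case, which you declare routine. First, your accounting runs backwards: when a reserved cell is a division cell and the components are bounded separately, the duplicated constant $-3$ is the \emph{credit} ($+3$ per extra component), while each boundary edge newly created along the cut is a \emph{cost} of $\frac12$ (the parts together have $|\Delta E|$ more edges than $P_{k,i}$), and holes contribute nothing here since $H=H_1+H_2$ across a split. The merge condition is $G_S=\frac12\Delta E-\Delta S+3\geq 0$, and this is \emph{not} always satisfied by the local configuration: the enumeration in \reffig{figCells/splitfaelle} and Table~\ref{tab-splitfaelle} contains splits with $G_S=-1$ and $G_S=-2$ (cases (4a) and (5a)). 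Closing these cases requires an additional, non-local observation, namely that in all such configurations the exploration of $P_1$ necessarily begins with a cell of $\MFWMode_L$, and removing that left-turn block of four cells gains $+2$ (first line of \reffig{figCells/ltsteps1}), which pays the deficit. Without an argument of this kind your induction simply does not close on exactly those splits; the per-division-cell ``one extra visit'' and ``fresh hole'' bookkeeping you propose would have to be replaced by this edge/step balance together with the left-turn compensation.
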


\begin{proof*}
We observe the differences in the number of steps, cells, edges
and holes between $P_{k, i}$ and $P_{k, i+1}$, and assume by
induction that our upper bound for the length of the exploration tour holds 
for $P_{k, i+1}$ and for the separated subpolygons $P_{k+j, 0}$.
Therefore, we have
to show that the limit is still not exceeded if we add the removed
cells and merge the subpolygons.
We want to show
that the following inequation is satisfied in every step:
$$ S \leq\ C + \Ehalbe + H + 2L - 3.$$

Let $G$ denote the ``profit'' made by \Strategyname; that is, the difference
between the actual number of steps and the upper bound. With $G$,
the preceding inequation is equivalent to
$$C + \Ehalbe + H + 2L - G - S - 3 = 0.$$

\pstexfig{Decomposing a step to the right into several forward steps.}
{figCells/decompright}

We have to consider three main cases: the division cells, the cells 
contained in $\MFWMode_L$ (left turns), and those contained in 
$\MFWMode_F \cup \MFWMode_R$ (forward steps and right turns).
There is no need to consider right turns explicitly, 
because steps to the right can be handled as a sequence
of forward steps, see \reffig{figCells/decompright}. 
The successive decomposition ends
with one single cell ($C=1, E=4, H=0$) for which 
$S=0 =  C + \Ehalbe + H + 2L - 3$ holds (Induction base).

\pagebreak 

\begin{description}
\item[Division cells]~\\
If one of the cells that are reserved in the first step in $P_{k, i}$
is a division cell, $P_{k, i}$ is split into two polygons
$P_{k+1, 0}$ ($P_1$ for short) and $P_{k+2, 0}$ ($P_2$ for short), 
see \reflem{zerfall}. We assume by induction that our upper bound is
achieved in both polygons: 
$$C_i + \Ehalbe_i + H_i + 2L_i - G_i - S_i - 3 = 0,\quad i\in\{1,2\}.$$

\noindent
For the merge of $P_1$ and $P_2$ into one polygon $P$, we can
state the following:
\begin{samepage}
\begin{eqnarray*}
S &=& S_1 + S_2 + \Delta S \\
E &=& E_1 + E_2 + \Delta E \\
C &=& C_1 + C_2 \\
H &=& H_1 + H_2 \\
L &=& L_1 + L_2 \\
G &=& G_1 + G_2 + G_S, 
\end{eqnarray*}
where $G_S$ denotes the profit made by merging the polygons.
\end{samepage}

We want to show that our bound is achieved in $P$:
\begin{eqnarray*}
\lefteqn{C + \Ehalbe + H +2L - G - S - 3} \\
&=&
C_1 + C_2 + {1 \over 2} ( E_1 + E_2 + \Delta E ) + H_1 + H_2+ 2L_1 \\
& &\mbox{} + 2L_2 - G_1 - G_2 - G_S - S_1 - S_2 - \Delta S - 3 \\
&=& \underbrace{C_1 + \Ehalbe_1 + H_1 + 2L_1 - G_1 - S_1 - 3}_{=0}  \\
&& \mbox{}+\underbrace{C_2 + \Ehalbe_2 + H_2 + 2L_2 - G_2 - S_2 - 3}_{=0}\\
&& \mbox{}+{1 \over 2} \Delta E - G_S - \Delta S + 3 \\
&=& 0 \\
\Longleftrightarrow \;\; G_S &=& {1 \over 2} \Delta E - \Delta S + 3 
\end{eqnarray*}
Thus, for every polygon split we have to observe $\Delta E$ and
$\Delta S$. If $G_S$ is positive, we gain some steps by
merging the polygons, if $G_S$ is negative, the merge incurs some costs.

The different configurations for a polygon split can be assigned to
several classes, depending on the number of 
common edges between $P_1$ and $P_2$ and the way, the robot returns
from $P_2$ to $P_1$---more precisely the distance between the
step from $P_1$ to $P_2$ and the step from $P_2$ to $P_1$, 
compare for example \reffig{figCells/splitfaelle}(3a) and (3b).
\reffig{figCells/splitfaelle}(i) and (ii) show some instances of one class.
Because the actual values for $\Delta E$ and $\Delta S$ depend
on only these two parameters, we do not list every possible polygon split 
in \reffig{figCells/splitfaelle}, but some instances of every class.

The balances of the polygon splits are shown in table \ref{tab-splitfaelle}. 
Two cases have a negative balance, so we need one more argument to show
that these cases do not incur any costs. Observe that after splitting
$P_{k,i}$ the polygon $P_1$ starts with a cell from $\MFWMode_L$
in the cases (1b)--(5). Removing this block of 4 cells, we gain
$+2$, see the first line of \reffig{figCells/ltsteps1}.
This covers the costs for the polygon split in the cases
(4a) and (5a).

\item[Forward steps]~\\
We have to consider several cases of forward steps, 
see \reffig{figCells/fwdsteps}. 
The table lists the differences
in the number of steps ($\Delta S$), cells ($\Delta C$), edges
($\Delta E$) and holes ($\Delta H$) if we {\em add} the considered cells
to $P_{k,i+1}$.
The last column shows 
$G = \Delta C + {1\over 2} \Delta E + \Delta H - \Delta S$.
After removing the observed block of cells, the remaining polygon must still 
be connected; otherwise, we would have to consider a polygon split first.
Thus, there are some cases, in which $\Delta H$ must be greater than zero.

It turns out that all cases
have a positive balance, except those that violate the 
assumption in \reflem{schranke1}
that the cell behind and right hand to the robot's position is blocked,
see the cases marked with (*) in \reffig{figCells/fwdsteps}.
Notice that the configurations shown in 
\reffig{figCells/NoFwdSteps} are left turns instead of forward steps!

To show that we have analyzed all possible 
cell configurations for a step forward, we use the following
observations.
When the robot makes a step forward, we know the following: Both behind
and left hand to the robot are walls (otherwise it would have turned left),
in front of the robot is no wall (otherwise it could not make a step
forward). Right hand to the robot may or may not be a wall. In the
latter case, we have three edges of interest
that may or may not be walls, yielding $2^3=8$ cases, which can be 
easily enumerated.
Two special cases occur by taking into consideration that
the robot may enter the observed block of cells from the upper cell or
from the right cell, see for example \reffig{figCells/fwdsteps}(5a) and (5b).

\pstexfig{Configurations that are steps to the left instead of forward steps.}
{figCells/NoFwdSteps}

\item[Steps to the left]~\\
Possible cases of left turns are shown in the figures
\ref{figCells/ltsteps1-fig}--\ref{figCells/ltsteps4-fig}. 
As in the previous case, the last column shows the balance. 
Again, we observe that all cases with a negative balance of $-3$ are
a violation of our assumption that the cell behind and right hand to 
the robot's position is blocked.
The negative balances of $-2$ are 
compensated by the addend $2L$ in our bound.

\pstexfig{Another class of left turns.}
{figCells/ltbeispiel}

When counting the number of steps, we assumed that the robot enters 
the block of cells from the same direction as it left the block 
(from the left as shown in the figures). The robot may enter
the block from the upper cell as shown in \reffig{figCells/ltbeispiel},
but this would only increase the balance.

The completeness of the cases can be shown with the same argument
as in the previous case: We know that there is a wall behind the robot and left hand
to the robot is no wall. Examining a block of four cells for a left turn,
we have six edges that may or may not be walls, and we have three cells
that may or may not be holes, yielding corresponding configurations
of edges.\proofendbox
\end{description}
\end{proof*}

\begin{table}[h]
\begin{center}
\begin{tabular}{cccc}
& $\Delta E$ & $\Delta S$ & $G_S$ \\
(1)   &   -2 &        2   &    0  \\
(2)   &   -4 &        0   &    1  \\
(3a)  &   -6 &        0   &    0  \\
(3b)  &   -6 &       -2   &    2  \\
(4a)  &   -8 &        0   &   -1  \\
(4b)  &   -8 &       -2   &    1  \\
(4c)  &   -8 &       -4   &    3  \\
(5a)  &  -10 &        0   &   -2  \\
(5b)  &  -10 &       -2   &    0  \\
\end{tabular}
\end{center}
\caption{Balances of polygon splits\label{tab-splitfaelle}}
\end{table}

\clearpage

\pstexfig{Some possible cases of polygon splits.}{figCells/splitfaelle}

\pstexfig{Cell configurations for forward steps.}{figCells/fwdsteps}

\psfig{Possible configurations for steps to the left (1).}{figCells/ltsteps1}

\psfig{Possible configurations for steps to the left (2).}{figCells/ltsteps2}

\psfig{Possible configurations for steps to the left (3).}{figCells/ltsteps3}

\psfig{Possible configurations for steps to the left (4).}{figCells/ltsteps4}

\clearpage

\noindent
Next, we want to discard the assumption we made in \reflem{schranke1}.
\begin{lem}{firststep}
The assumption that the cell behind and right hand to 
the robot's position is blocked 
can be violated only in the robot's initial position;
that is, in $P_{0,0}$ and not in any $P_{k,i}$ with $k+i>0$.
\end{lem}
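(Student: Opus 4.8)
The plan is to show that after the very first step of the strategy, the invariant ``the cell behind-and-right of the robot is blocked'' is automatically re-established and maintained. The key observation is structural: in any polygon $P_{k,i}$ with $k+i>0$, the start cell $s_{k,i}$ is, by construction, the cell the robot entered with its first step in the \emph{previous} polygon $P_{k,i-1}$ (or the first cell of a freshly split-off subpolygon $P_{k+j,0}$). In either case this entry happened by an application of the left-hand rule in the parent polygon, and the cells removed from the parent --- namely $s_{k,i-1}$ together with all cells $c$ with $\varphi(c)=s_{k,i-1}$ --- are precisely the cell the robot was standing on plus the cells it reserved to its right. I would argue that these removed cells are exactly the ones occupying the ``behind'' and ``behind-right'' positions relative to the robot's orientation as it enters $s_{k,i}$, so from the viewpoint of $P_{k,i}$ those positions are now outside the polygon, i.e. blocked.

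First I would set up the bookkeeping carefully: fix a polygon $P_{k,i}$ with $k+i>0$, identify its parent (either $P_{k,i-1}$ via the ``remove start cell and its reserved cells'' operation, or the polygon that was split by \reflem{zerfall} to produce $P_{k+j,0}$), and record the robot's direction $\mathit{dir}$ upon entering $s_{k,i}$. Then I would do a small case analysis on how the robot left its previous cell in the parent polygon --- a left turn, a forward step, or (reducible to forward steps) a right turn --- mirroring the three cases already used in the proof of \reflem{schranke1}. In each case I would check, using the reservation rule (``reserve the cell to the right; if a left turn, reserve right, forward-right, and straight''), that the cell behind $s_{k,i}$ and the cell behind-and-right of $s_{k,i}$, measured in direction $\mathit{dir}$, are among the cells deleted when passing from the parent to $P_{k,i}$, hence do not belong to $P_{k,i}$ and count as blocked there. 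For the split case the same argument applies: \reflem{zerfall} tells us the split occurs by removing $s_{k,i-1}$ and its reserved cells, so the first cell of each resulting component sits in exactly the same local configuration.

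The main obstacle I anticipate is handling the boundary/corner subtleties: a reserved cell may only \emph{touch} the current cell (the reservations are stored only in memory, as the footnote to \refalgo{cellexplore} stresses), and the square-of-four-cells picture around $s_{k,i}$ can interact with holes or with the outer wall in several ways. So I would need to verify that in \emph{every} admissible local configuration --- the very configurations already enumerated in \reffig{figCells/fwdsteps} and \reffig{figCells/ltsteps1}--\reffig{figCells/ltsteps4} for forward steps and left turns --- the behind and behind-right positions of the successor cell are genuinely removed. Concretely, the plan is: (1) reduce right turns to forward steps as in \reflem{schranke1}; (2) for a forward step in the parent, the robot had a wall behind and to its left, reserved the cell to its right (or its own cell, if that was impossible), and the successor's ``behind'' cell is the parent cell while the successor's ``behind-right'' cell is the reserved cell --- both removed; (3) for a left turn, the robot reserved the three cells right/forward-right/straight, and again the successor's behind and behind-right cells are among these; (4) for the split case, invoke \reflem{zerfall} to reduce to (2)/(3). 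Finally I would note that $P_{0,0}$ is the only polygon with no parent, which is why the assumption can fail exactly there, completing the proof. I do not expect any nontrivial computation; the work is entirely in the exhaustive but routine local case check, and in being careful that ``blocked'' correctly includes ``lies outside the current subpolygon''.
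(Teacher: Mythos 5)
Your proposal is correct and rests on the same key idea as the paper's own (much shorter) proof: the robot enters $s_{k,i}$ from the cell it occupied previously, and the cell diagonally behind-and-right of its new position is, in the parent polygon, either a hole or a cell reserved while the robot stood on that previous cell, hence it is removed in the passage to $P_{k,i}$ and counts as blocked there. The paper avoids your case analysis over step types (forward, left, right, split) because the robot always faces its direction of motion, so the cell it came from is directly behind it in every case and the single observation about the cell to that cell's right already suffices.
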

\begin{proof*}
\begin{smallfig}{figCells/firststep}
Consider a robot located in some cell $c$ with no wall behind and right hand to
its position, and \WLOG\ $\VAR{dir}=$'north'.
If this is not the robot's initial position, but the position $s_{k,i}$ of
a polygon $P_{k,i}$ occurring in the successive decomposition of the polygon,
the robot must have entered the cell $c$ from the cell $c'$ below $c$ in the
polygon $P_{k,i-1}$. If the cell $c''$ right to $c'$ is not a hole in 
$P_{k,i-1}$, $c''$ would be a reserved cell, and, thus, it would be 
removed together with $c'$ in the step from $P_{k,i-1}$ to $P_{k,i}$. 
Consequently, when the robot has reached $c$, there would be a hole 
behind and right hand to its current position.\proofendbox
\end{smallfig}
\end{proof*}

\begin{lem}{schranke2}
The number of steps, $S$, used to explore a polygon with $C$ cells,
$E$ edges and $H$ holes, is bounded by
$$ S \leq\ C + \Ehalbe + H + 2L - 2,$$
where $L$ denotes the number of the robot's left turns.
\end{lem}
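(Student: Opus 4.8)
The plan is to upgrade Lemma~\ref{schranke1-lem} by using Lemma~\ref{firststep-lem} to account for the single exceptional cell. Lemma~\ref{schranke1-lem} proves that $S \leq C + \Ehalbe + H + 2L - 3$ holds \emph{under} the assumption that the cell behind and right hand to the robot is blocked, and its proof is an induction over the cells in $\MFWMode$ (equivalently, over the decomposition sequence $P_{k,i}$). The induction step there is valid precisely because every configuration that violates the assumption—the cases marked $(*)$ in the forward-step and left-turn tables—never arises \emph{except} at $P_{0,0}$, which is exactly what Lemma~\ref{firststep-lem} asserts. So the idea is: run the very same analysis on $P = P_{0,0}$, but at the first step allow the exceptional configuration, and show that the worst such first step costs at most one extra unit relative to the bound, i.e.\ yields $-2$ instead of $-3$.

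First I would set up the induction exactly as in Lemma~\ref{schranke1-lem}, but with the weaker target inequality $C + \Ehalbe + H + 2L - G - S - 2 \geq 0$. For every $P_{k,i}$ with $k+i>0$, Lemma~\ref{firststep-lem} guarantees the assumption of Lemma~\ref{schranke1-lem} holds there, so the stronger bound $S \leq C + \Ehalbe + H + 2L - 3$ applies and a fortiori the weaker one does; hence the inductive hypothesis is supplied for all subpolygons arising after the first step. It remains to handle the single first step taken in $P_{0,0}$. If that first step happens to satisfy the blocked-cell assumption, Lemma~\ref{schranke1-lem} applies verbatim and we even get $-3$. Otherwise, the first step is one of the exceptional configurations; I would revisit the corresponding entries of the forward-step and left-turn case tables (Figures~\ref{figCells/fwdsteps-fig} and \ref{figCells/ltsteps1-fig}--\ref{figCells/ltsteps4-fig}) and check that each such exceptional case has balance $G \geq -1$ rather than the $G\geq 0$ that holds in the non-exceptional cases. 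Feeding this one-unit deficit through the telescoping of the decomposition gives the claimed global bound with $-2$ in place of $-3$.

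The main obstacle I anticipate is purely bookkeeping: making sure that the "$-1$" really is the worst case over all exceptional first-step configurations, and that combining the first-step deficit with a possible polygon split at the very first step (the cases $(4a)$ and $(5a)$ in Table~\ref{tab-splitfaelle}, whose negative balances were compensated in Lemma~\ref{schranke1-lem} by removing an $\MFWMode_L$-block) does not produce a larger loss. One must verify that the compensating argument used there—that $P_1$ begins with a $4$-cell left-turn block worth $+2$—is still available when the split occurs at $P_{0,0}$, or else absorb any residual cost into the single relaxed unit. Once that interaction is checked, the statement follows; the proof is essentially "Lemma~\ref{schranke1-lem} plus Lemma~\ref{firststep-lem}, relaxing the constant by one to pay for the unique bad initial cell."
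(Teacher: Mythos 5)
Your proposal matches the paper's argument: the paper proves this lemma exactly by combining Lemma \ref{firststep-lem} (the assumption can fail only at the initial position $P_{0,0}$) with the observation, already made in the case analysis of Lemma \ref{schranke1-lem}, that each assumption-violating configuration costs only one additional step, which relaxes the constant from $-3$ to $-2$. Your extra care about a possible split at the very first step is just the bookkeeping the paper leaves implicit, so the approach is essentially identical.
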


\begin{proof}
\reflem{firststep} shows that the assumption in \reflem{schranke1}
can be violated only in 
the robot's initial position. On the other hand, we have seen in the proof of 
\reflem{schranke1} that all cases that violate the assumption 
incur the costs of 
just one additional step.
\end{proof}

\pstexfig{Corridors of odd width.}{figCells/oddcorridor}

Our bound still depends on the number of left turns the robot makes while
exploring the polygon. To give a bound that does not depend on the
strategy, we introduce another property of grid polygons, 
a measure to distinguish rather
flat polygons from winded polygons; let us call it the
{\em sinuosity} of $P$.
Our motivation for introducing this property is
the following observation: The robot may walk 
$n+1$ times through a corridor of width $n$, $n$ odd, 
see \reffig{figCells/oddcorridor}(i). 
The costs for this extra walk are covered by the corridor walls;
more precisely, for every double visit we charge {\em two} polygon edges
and get the addend ${1\over 2}E$ in the upper bound. 
If there is a left turn in the corridor, there are not enough boundary edges
for balancing the extra walk. 
\reffig{figCells/oddcorridor}(ii) shows a corridor
of width 3 with a left turn. The steps shown with dashed lines cannot be
assigned to edges, so we have to count the edges shown with dashed lines
to balance the number of steps.
We define two types of sinuosities, the clockwise and the counterclockwise
sinuosity. Because \Strategyname\ follows the left-hand rule, 
our strategy depends on the clockwise sinuosity. A similar strategy that
follows the right-hand rule would depend on the counterclockwise
sinuosity.

\pstexfig{Contributions to $\WCW$ by (i) the outer boundary, (ii) 
inner boundaries.}
{figCells/WcwInnerOuter}

\pstexfig{Reflex vertices $p_i$ and the corresponding squares $Q_i$. 
$\WCW=q_1'+q_3'=8, \WCCW=q_4'=2$.}
{figCells/convexvertex}

\begin{defi}{winding}
Let the \DEFI{clockwise} \DEFI{sinuosity}, $\WCW$, and the 
\DEFI{counterclockwise sinuosity}, $\WCCW$, 
of a grid polygon $P$ be defined as follows:
We trace the boundary of $P$---the outer boundary clockwise, the
boundaries of the holes inside $P$ counterclockwise---, and
consider every pair, $p_i$ and $p_{i+1}$, of consecutive reflex vertices,
see \reffig{figCells/WcwInnerOuter}.

We trace the angular bisector between the two edges incident to $p_i$
inside $P$ until it hits the boundary of $P$. The resulting line
segment defines the diagonal of a square, $Q_i$,
see \reffig{figCells/convexvertex}.%
\footnote{We can construct $Q_i$ by ``blowing up'' a square around the 
cell in $P$ that touches the boundary of $P$ in $p_i$ until
the corner of $Q_i$ opposite to $p_i$ hits the polygon's boundary.}
Let $q_i$ be the width of $Q_i$, analogously with $q_{i+1}$.

Because the robot needs some further steps only in odd corridors, we count
only odd squares:
$$q_i':=\cases{q_i-1, & if $q_i$ is odd \cr 0, & if $q_i$ is even}\,.$$
The need for additional edges may not only be caused by reflex vertices,
but also by the start cell, see \reffig{figCells/WcwExa}(ii).
Thus, we consider the squares $Q_{\rmsub{s}{cw}}$ 
and $Q_{\rmsub{s}{ccw}}$ from the start cell in clockwise and
counterclockwise direction, respectively. Let $q_{\rmsub{s}{cw}}'$ 
and $q_{\rmsub{s}{ccw}}'$ be defined analogously to $q_i'$.
Now, we define the clockwise sinuosity $\WCW$ 
and the counterclockwise sinuosity $\WCCW$ 
as 
$$\WCW:=q_{\rmsub{s}{ccw}}'+\sum_{i\geq 1} q_{2i-1}', 
\quad \mbox{\rm and} \quad
\WCCW:=q_{\rmsub{s}{cw}}'+\sum_{i\geq 1} q_{2i}'.$$
\end{defi}

\reffig{figCells/WcwExa} shows two examples for the definition of $\WCW$.
Note that in (i) only one reflex vertex contributes to $\WCW$, and
every edge we count here is needed.

\pstexfig{Examples for the definition of $\WCW$: (i) A polygon with
$C=193, \frac{E}2=78, H=3, \WCW=6, S=284$ 
(the bound for $S$ is exactly achieved),
(ii) the start cell contributes to $\WCW$, too ($C=46, \frac{E}2=23,
H=2, \WCW=2, S=74$).}
{figCells/WcwExa}


\pagebreak 

\noindent
With the definition of $\WCW$, we can give our final result:

\begin{theo}{schrankefinal}
Let $P$ be a grid polygon with $C$ cells, $E$ edges, $H$ holes, and
clockwise sinuosity $\WCW$. 
\Strategyname\ explores $P$ using no more than
$$ S \leq\ C + \Ehalbe + \WCW + 3H - 2$$
steps. This bound is tight.
\end{theo}

\pstexfig{Left turn followed by (i) a right turn and (ii) a reduction.}
{figCells/narrowrt}

\begin{proof}
We need some global arguments to charge the costs for a left turn
to properties of $P$. So let us examine, which configurations 
may follow a left turn (after some forward steps):

\begin{itemize}
\item A right turn follows the left turn, see \reffig{figCells/narrowrt}(i).
  We gain $+2$ steps per right turn, so the possible costs of $-2$ for
  this left turn are covered.

\item An obstacle follows the left turn. We can charge the obstacle
  with the costs for the left turn and get a factor of 3 for the
  number of obstacles. Every obstacle is charged at most once, because
  when the successive decomposition reaches the obstacle for the first
  time, the obstacle disappears; that is, the hole merges with the outer
  boundary.

\item A reduction follows the left turn, see \reffig{figCells/narrowrt}(ii).
  Later in this section, we show that a reduction covers the costs of a left turn.

\item Another left turn follows the observed left turn. In this case, there
  is no other property of $P$ to be charged with this costs but the
  sinuosity $\WCW$, this follows directly from the definition of $\WCW$.
\end{itemize}

\pstexfig{(i) A reduction follows a left turn, (ii) the left turn causes a 
polygon split, (iii) no forward steps between the left turn and the
reduction ($d=0)$.}
{figCells/Reduction1}

In the case of a reduction following a left turn
we observe the number, $d$, of forward steps between the left turn
and the reduction, as well as the cell marked with $b$, 
and---if $d\geq 1$---the 
cell marked with $a$ in \reffig{figCells/Reduction1}(i).
If $b$ is blocked, $b$ is either part of an
obstacle inside the polygon or it is outside the polygon. In the
first case, we charge the obstacle with the costs of the left turn
as described earlier. In the second case we have a polygon split that
leaves us with a left turn that incurs no costs, see
\reffig{figCells/Reduction1}(ii) and the first line of \reffig{figCells/ltsteps1}.
The same holds for the cell marked with $a$ if there is at least
one forward step between the left turn and the reduction (\IE, $d \geq 1$).
Therefore, we assume that $a$ and $b$ are free cells in the following.

If the reduction follows immediately after the left turn ($d=0$), see
\reffig{figCells/Reduction1}(iii), we have one of the left turns shown 
in \reffig{figCells/ltsteps2}, \reffig{figCells/ltsteps4} or the lower half of 
\reffig{figCells/ltsteps3}. In any of these cases we have either a positive
balance or we meet an obstacle ($\Delta H>0$) and charge the costs for
the left turn to the obstacle as earlier.

\pstexfig{If $a$ and $b$ are free cells we gain 2. 
(i) $d=1$, (ii) $d\geq 1$.}
{figCells/Reduction2}

If there is one forward step between the left turn and the reduction
($d=1$), the robot enters the $2\times 2$ block of cells of the left turn
not from the same side as it left it, because $a$ and $b$ are polygon cells.
In \reffig{figCells/Reduction2}(i) the robot leaves the block to 
the left but enters it from above.
This situation is described in \reffig{figCells/ltbeispiel} on
\refpagefig{figCells/ltbeispiel} and reduces the costs for the left turn by 2, 
so the balance is either zero or positive.
If there is more than one forward step ($d>1$), we have either the
same situation as in the preceding case, or the reduction from a corridor
of width $\geq 3$ to a corridor of width $\leq 2$ shifts to the left,
see \reffig{figCells/Reduction2}(ii), and eventually we reach a forward step
as shown in \reffig{figCells/fwdsteps}(5b) that gains $+2$ and covers the
costs for the left turn.

Altogether, we are able to charge the costs for every left turn to
other properties, which proves our bound.

\reffig{figCells/WcwExa} and
\reffig{figCells/CellExplTight} show 
nontrivial examples (\IE, $H\neq 0$ and $\WCW \neq 0$) for 
polygons in which the bound is exactly achieved.
\end{proof}

\pstexfig{Polygon with $C=34, \frac{E}2=17, H=1, \WCW=2, 
S=54=C+\frac12E+3H+\WCW-2$.}{figCells/CellExplTight}

\sect{Summary}{cellconcl}

We considered the exploration of grid polygons.
For simple polygons we have shown a lower bound of $\frac76$ and 
presented a strategy, \smartDFS, that explores simple polygons with 
$C$ cells and $E$ edges using no more than $C+\frac12 E-3$ steps
from cell to cell. 
Using this upper bound, we were able to show that \smartDFS\ is
in fact $\frac43$-competitive, leaving a gap of only $\frac16$ between
the upper and the lower bound.

On the other hand, the competitive complexity 
for the exploration of grid polygons with holes is $2$. A simple
DFS exploration already achieves this competitive factor, but, nevertheless,
DFS is not the best possible strategy, because is it not necessary to
visit {\em each} cell twice. Therefore, we developed the
strategy \Strategyname\ that takes advantage of wider areas in the
polygon, and thus corrects the weakness in the DFS strategy.

Interesting open questions are, how SmartDFS and CellExplore can be
generalized to higher dimensions and other cell types than squares, e.g.\
triangular or hexagonal cells.

\bibliographystyle{abbrv}
\bibliography{%
        ../../../abt1/biblio/local,%
        ../../../abt1/biblio/update,%
        ../../../abt1/biblio/geom}

\begin{thebibliography}{10}

\bibitem{aks-eueo-02}
S.~Albers, K.~Kursawe, and S.~Schuierer.
\newblock Exploring unknown environments with obstacles.
\newblock {\em Algorithmica}, 32:123--143, 2002.

\bibitem{ag-tsgr-03}
S.~Alpern and S.~Gal.
\newblock {\em The Theory of Search Games and Rendezvous}.
\newblock Kluwer Academic Publications, 2003.

\bibitem{afm-aalmm-97}
E.~M. Arkin, S.~P. Fekete, and J.~S.~B. Mitchell.
\newblock Approximation algorithms for lawn mowing and milling.
\newblock Technical report, Mathematisches Institut, Universit{\"a}t zu
  K{\"o}ln, 1997.

\bibitem{a-ptase-96}
S.~Arora.
\newblock Polynomial time approximation schemes for {Euclidean} {TSP} and other
  geometric problems.
\newblock In {\em Proc. 37th Annu. IEEE Sympos. Found. Comput. Sci.}, pages
  2--11, 1996.

\bibitem{bcr-sp-93}
R.~Baeza-Yates, J.~Culberson, and G.~Rawlins.
\newblock Searching in the plane.
\newblock {\em Inform. Comput.}, 106:234--252, 1993.

\bibitem{brs-plue-95}
M.~Betke, R.~L. Rivest, and M.~Singh.
\newblock Piecemeal learning of an unknown environment.
\newblock {\em Machine Learning}, 18(2--3):231--254, 1995.

\bibitem{dkp-hlue1-98}
X.~Deng, T.~Kameda, and C.~Papadimitriou.
\newblock How to learn an unknown environment {I}: The rectilinear case.
\newblock {\em J. ACM}, 45(2):215--245, 1998.

\bibitem{e-uogmr-89}
A.~Elfes.
\newblock Using occupancy grids for mobile robot perception and navigation.
\newblock {\em IEEE Computer}, 22(6):46--57, 1989.

\bibitem{e-hpnrg-86}
H.~Everett.
\newblock Hamiltonian paths in non-rectangular grid graphs.
\newblock Report 86-1, Dept. Comput. Sci., Univ. Toronto, Toronto, ON, 1986.

\bibitem{fw-ola-98}
A.~Fiat and G.~Woeginger, editors.
\newblock {\em On-line Algorithms: The State of the Art}, volume 1442 of {\em
  Lecture Notes Comput. Sci.}
\newblock Springer-Verlag, 1998.

\bibitem{gr-colcg-03}
Y.~Gabriely and E.~Rimon.
\newblock Competitive on-line coverage of grid environments by a mobile robot.
\newblock {\em Comput. Geom. Theory Appl.}, 24:197--224, 2003.

\bibitem{gkp-aspgt-95}
M.~Grigni, E.~Koutsoupias, and C.~H. Papadimitriou.
\newblock An approximation scheme for planar graph {TSP}.
\newblock In {\em Proc. 36th Annu. IEEE Sympos. Found. Comput. Sci.}, pages
  640--645, 1995.

\bibitem{hiklm-gaesu-00}
U.~Handel, C.~Icking, T.~Kamphans, E.~Langetepe, and W.~Meiswinkel.
\newblock Gridrobot --- an environment for simulating exploration strategies in
  unknown cellular areas.
\newblock Java Applet, 2000.
\newblock http://www.geometrylab.de/Gridrobot/.

\bibitem{hikk-pep-01}
F.~Hoffmann, C.~Icking, R.~Klein, and K.~Kriegel.
\newblock The polygon exploration problem.
\newblock {\em SIAM J. Comput.}, 31:577--600, 2001.

\bibitem{hz-scthc-95}
C.~Hwan-Gue and A.~Zelikovsky.
\newblock Spanning closed trail and {Hamiltonian} cycle in grid graphs.
\newblock In {\em Proc. 6th Annu. Internat. Sympos. Algorithms Comput.}, volume
  1004 of {\em Lecture Notes Comput. Sci.}, pages 342--351. Springer-Verlag,
  1995.

\bibitem{ik-skpcs-95}
C.~Icking and R.~Klein.
\newblock Searching for the kernel of a polygon: {A} competitive strategy.
\newblock In {\em Proc. 11th Annu. ACM Sympos. Comput. Geom.}, pages 258--266,
  1995.

\bibitem{ikl-ocsws-99}
C.~Icking, R.~Klein, and E.~Langetepe.
\newblock An optimal competitive strategy for walking in streets.
\newblock In {\em Proc. 16th Sympos. Theoret. Aspects Comput. Sci.}, volume
  1563 of {\em Lecture Notes Comput. Sci.}, pages 110--120. Springer-Verlag,
  1999.

\bibitem{iklss-ocsws-04}
C.~Icking, R.~Klein, E.~Langetepe, S.~Schuierer, and I.~Semrau.
\newblock An optimal competitive strategy for walking in streets.
\newblock {\em SIAM J. Comput.}, 33:462--486, 2004.

\bibitem{ips-hpgg-82}
A.~Itai, C.~H. Papadimitriou, and J.~L. Szwarcfiter.
\newblock Hamilton paths in grid graphs.
\newblock {\em SIAM J. Comput.}, 11:676--686, 1982.

\bibitem{m-gsaps-96}
J.~S.~B. Mitchell.
\newblock Guillotine subdivisions approximate polygonal subdivisions: {A}
  simple new method for the geometric {$k$-MST} problem.
\newblock In {\em Proc. 7th ACM-SIAM Sympos. Discrete Algorithms}, pages
  402--408, 1996.

\bibitem{m-spn-97}
J.~S.~B. Mitchell.
\newblock Shortest paths and networks.
\newblock In J.~E. Goodman and J.~O'Rourke, editors, {\em Handbook of Discrete
  and Computational Geometry}, chapter~24, pages 445--466. CRC Press LLC, Boca
  Raton, FL, 1997.

\bibitem{m-gspno-00}
J.~S.~B. Mitchell.
\newblock Geometric shortest paths and network optimization.
\newblock In J.-R. Sack and J.~Urrutia, editors, {\em Handbook of Computational
  Geometry}, pages 633--701. Elsevier Science Publishers B.V. North-Holland,
  Amsterdam, 2000.

\bibitem{me-hrmwa-85}
H.~P. Moravec and A.~Elfes.
\newblock High resolution maps from wide angle sonar.
\newblock In {\em Proc. IEEE Internat. Conf. Robot. Autom.}, pages 116--121,
  1985.

\bibitem{n-wrlv-92}
S.~Ntafos.
\newblock Watchman routes under limited visibility.
\newblock {\em Comput. Geom. Theory Appl.}, 1(3):149--170, 1992.

\bibitem{o-agta-87}
J.~O'Rourke.
\newblock {\em Art Gallery Theorems and Algorithms}.
\newblock The International Series of Monographs on Computer Science. Oxford
  University Press, New York, NY, 1987.

\bibitem{rksi-rnut-93}
N.~S.~V. Rao, S.~Kareti, W.~Shi, and S.~S. Iyengar.
\newblock Robot navigation in unknown terrains: introductory survey of
  non-heuristic algorithms.
\newblock Technical Report ORNL/TM-12410, Oak Ridge National Laboratory, 1993.

\bibitem{ul-hcsgg-97}
C.~Umans and W.~Lenhart.
\newblock Hamiltonian cycles in solid grid graphs.
\newblock In {\em Proc. 38th Annu. IEEE Sympos. Found. Comput. Sci.}, pages
  496--507, 1997.

\end{thebibliography}

\end{document}